\newcommand{\@chapapp}{\relax}%
\newcommand{\lweight}[1]{\mathrm{wt}_{L} \left( #1 \right) }
 \newtheorem{theorem}{Theorem}
\newtheorem{problem}{Problem}
 \newtheorem{lemma}[theorem]{Lemma}
 \newtheorem{corollary}[theorem]{Corollary}
 \theoremstyle{definition}
 \newtheorem{definition}[theorem]{Definition}
 \theoremstyle{remark}
 \newtheorem{remark}[theorem]{Remark}
\numberwithin{equation}{section}
\newcommand{\zps}{\mathbb{Z} /p^s \mathbb{Z}}
\newcommand{\zpsk}[2][s]{\left(\mathbb{Z}/p^{#1}\mathbb{Z}\right)^{#2}}
\newcommand{\bA}{\mathbf{A}}
\newcommand{\bB}{\mathbf{B}}
\newcommand{\bH}{\mathbf{H}}
\newcommand{\bG}{\mathbf{G}}
\newcommand{\bs}{\mathbf{s}}
\newcommand{\bc}{\mathbf{c}}
\newcommand{\be}{\mathbf{e}}
\newcommand{\bt}{\mathbf{t}}
\newcommand{\by}{\mathbf{y}}
\newcommand{\bx}{\mathbf{x}}
\newcommand{\bU}{\mathbf{U}}
\newcommand{\bP}{\mathbf{P}}
\newcommand{\concat}{%
  \mathbin{{+}\mspace{-8mu}{+}}%
}
\newcommand{\set}[1]{\left\lbrace #1 \right\rbrace}
\newcommand{\card}[1]{\left\vert #1 \right\vert}
\newcommand{\prob}{\mathbb{P}} 
\tikzset{caption/.style={insert path={
let \p1=($(current bounding box.east)-(current bounding box.west)$) in
(current bounding box.south) node[below,text width=\x1-4pt,align=center] 
{\captionof{figure}{#1}}}}}
\DeclareMathOperator{\supp}{supp}
\title{Information Set Decoding for Lee-Metric Codes using Restricted Balls}
\begin{document}
	\author[J. Bariffi]{Jessica Bariffi}
	\address{Institute of Communication and Navigation \\ German Aerospace Center \\
 Germany
	}
	\email{jessica.bariffi@dlr.de}
	
	\author[K. Khathuria]{Karan Khathuria}
	\address{Institute of Computer Science\\ University of Tartu \\ Estonia
	}
	\email{karan.khathuria@ut.ee}
	
	\author[V. Weger]{Violetta Weger}
	\address{Department of Electrical and Computer Engineering\\
		Technical University of Munich\\
 Germany
	}
	\email{violetta.weger@tum.de}
	\keywords{Information Set Decoding, Lee Metric, Code-Based Cryptography}

\maketitle

\begin{abstract}
The Lee metric syndrome decoding problem is an NP-hard problem and several generic decoders have been proposed. The observation that such decoders come with a larger cost than their Hamming metric counterparts make the Lee metric a promising alternative for classical code-based cryptography.  Unlike in the Hamming metric, an error vector that is chosen uniform at random of a given Lee weight is expected to have only few entries with large Lee weight. Using this expected distribution of entries, we are able to drastically decrease the cost of generic decoders in the Lee metric, by reducing the original problem to a smaller instance, whose solution lives in restricted balls. 

\end{abstract}

\section{Introduction}
The original syndrome decoding problem (SDP) asks to decode a random linear code over a finite field endowed with the Hamming metric. This problem has been long studied and is well understood. The SDP is an NP-hard problem \cite{Barg, NP} and lays the foundation of code-based cryptography, which is a promising candidate for post-quantum cryptography. 
The fastest algorithms to solve the syndrome decoding problem are called information set decoding (ISD) algorithms and  started with the work of Prange \cite{prange} in 1962.  Although the  literature on ISD algorithms in this classical case is vast  (see \cite{BJMM,  bernstein2011smaller, canteaut1998new, canteautsendrier, chabaud, finiasz,  Lee1988, Leon1988, mmt, stern}), the cost of generic decoding has only decreased little and is considered stable.  The fastest algorithm over the binary until this day is called BJMM algorithm \cite{BJMM} and uses the idea of representation technique from \cite{rep}. 
For an overview of the binary case see \cite{meurer2013coding}. With new cryptographic schemes proposed over general finite fields, most of these algorithms have been generalized to $\mathbb F_q$ (see \cite{  klamti, hirose, interlando2018generalization,  niebuhr, peters}). 

Due to new challenges in code-based cryptography, such as the search for efficient signature schemes, also other metrics are now investigated. 
For example the rank metric has gained a lot of attention due to the NIST submission ROLLO \cite{rollo} and RQC \cite{RQC}. 
While the understanding on the hardness of the rank-metric SDP is still rapidly developing  (see the new benchmark achieved in \cite{minrank}), it is still unknown whether the rank-metric SDP is an NP-hard problem.

The situation for the Lee metric is quite different.  The Lee-metric SDP was first studied for codes over $\mathbb{Z}/4\mathbb{Z}$ in \cite{Z4}. Later, in \cite{leenp} the problem was shown to be NP-hard over any $\mathbb{Z}/p^s\mathbb{Z}$ and several generic decoding algorithms to solve the problem have been provided. Also the paper \cite{thomas} confirmed the cost regimes of \cite{leenp} and more importantly the observation, that Lee-metric ISD algorithms cost more than their Hamming metric counterparts for fixed input parameters. Thus, the Lee metric has a great potential to reduce the key sizes or signature sizes in code-based cryptosystems. Modern code-based cryptography is moving away from the classical idea of McEliece \cite{mceliece}, where the distinguishability of the secret code obstructs a security reduction to the SDP, and moving towards ideas from lattice-based cryptography such as the ring learning with error (RLWE) problem. Note, that the Lee metric is the closest metric in coding theory to the Euclidean metric used in lattice-based cryptography, in the sense that both metrics take into consideration the magnitude of the entries. 

In this paper we use the new results from \cite{distribution} on the marginal distribution of vectors of given Lee weight to reduce the cost of the Lee-metric ISD algorithms further and thus contribute to the recent advances in understanding the hardness of this problem, with the final goal to deem this setting secure for applications. 

In fact, for the Lee-metric SDP we assume that the instance is given  by a randomly chosen parity-check matrix and an error vector of fixed Lee weight which was also chosen uniformly at random. The results from \cite{distribution} now provide us with new and central information on the sought-after error vector $\be$. In fact, using the marginal distribution we are able to determine the expected number of entries of $\be$, which have a fixed Lee weight. 
The main idea of the novel algorithm is that we expect only very few entries of $\be$ to have a large Lee weight (which is defined through a threshold $r$) if the relative Lee weight is lower than a fixed constant depending on the size $p^s$ of the residue ring $\zps$. Thus, using the partial Gaussian elimination (PGE) approaches from the classical case, like in BJMM \cite{BJMM}, we are able to reduce the original instance to a smaller instance, where the sought-after smaller error vector now only has entries of Lee weight up to $r$ and thus lives in smaller Lee-metric balls. This will clearly help reducing the cost of ISD algorithms. Similar ideas, that is to use the more reliable parts of an error vector to decode,  have been used since long (see e.g. \cite{fossorier}).
This paper thus reduces the cost of the algorithms from \cite{leenp} and \cite{thomas}, which were the fastest known Lee-metric ISD algorithms up to now.

This paper is organized as follows. In Section \ref{sec:prelim} we introduce the required notions on ring-linear codes and results for Lee-metric codes, such as the asymptotic of restricted spheres. In Section \ref{sec:distr}  we recall the results of \cite{distribution} on the marginal distribution and introduce the necessary values  for our algorithm.
The main part, the new Lee-metric ISD algorithm, is presented in Section \ref{sec:algo} together with an asymptotic cost analysis. The analysis considers the average time complexity in the classical and the quantum case. In addition, a reversed algorithm is presented in Section \ref{bigball}, where we decode  beyond the minimum distance. Finally, in Section \ref{sec:comparison} we compare the new algorithm to the previously fastest Lee-metric ISD algorithms.  

\section{Preliminaries}\label{sec:prelim}

\paragraph{Notation:} Let $p$ be a prime and $s$ be a positive integer and let us consider the integer residue ring $\mathbb{Z}/p^s\mathbb{Z}$. 
The cardinality of a set $V$ is denoted as $\card{V}$ and its complement by $V^C$.
We use bold lower case (respectively, upper case) letters to denote vectors  (respectively, matrices). By abuse of notation, a tuple in a module over a ring will still be denoted by a vector. 
The $n \times n$ identity matrix will be denoted by $\text{Id}_n.$ 
Let  $S \subseteq \{1, \ldots, n\}$.
For a vector $\bx \in \left(\mathbb{Z}/p^s\mathbb{Z}\right)^n$, we denote by $\bx_S$ the vector consisting of the entries of $\bx$ indexed by $S.$ Similarly, for a matrix $\bA \in \left(\mathbb{Z}/p^s\mathbb{Z}\right)^{k \times n}$, we denote by $\bA_S$ the matrix consisting of the columns of $\bA$ indexed by $S.$ Finally the symmetric group of $\{1, \ldots, n \}$ is denoted by $S_n.$

\begin{definition}
A \textbf{linear code} $\mathcal{C} \subseteq \left(\mathbb{Z}/p^s\mathbb{Z}\right)^n$ is a $\mathbb{Z}/p^s\mathbb{Z}$-submodule of $\left(\mathbb{Z}/p^s\mathbb{Z}\right)^n$.
\end{definition}
Since we are over a ring, our code does not possess a dimension, instead we denote by the \textbf{$\mathbb{Z}/p^s\mathbb{Z}$-dimension} of the code $\mathcal{C} \subseteq \left(\mathbb{Z}/p^s\mathbb{Z}\right)^n$ the following 
\[k \coloneqq \log_{p^s}\left(\mid \mathcal{C} \mid \right),\] such that the \textbf{rate} of the code is given by $R=\frac{k}{n}.$ In addition to the $\mathbb{Z}/p^s\mathbb{Z}$-dimension, the code $\mathcal{C} \subseteq \left(\mathbb{Z}/p^s\mathbb{Z}\right)^n$  also possesses a \textbf{rank} $K$, which is defined as the minimal number of generators of $\mathcal C$ as a $\zps$-module. In the case of a non-free code, note that $k<K.$

As for classical codes, we still have the notion of generator matrix and parity-check matrix.
\begin{definition}
Let  $\mathcal{C} \subseteq \left(\mathbb{Z}/p^s\mathbb{Z}\right)^n$  be a linear code, then a matrix $\bG$ over $\mathbb{Z}/p^s\mathbb{Z}$ is called a \textbf{generator matrix} for $\mathcal{C}$, if it has the code as row span and a matrix $\bH$ is called  a \textbf{parity-check matrix} for $\mathcal{C}$ if it has the code as kernel. 
\end{definition}
For a code  $\mathcal{C} \subseteq \left(\mathbb{Z}/p^s\mathbb{Z}\right)^n$,  we denote by $\mathcal{C}_S$ the code consisting of all codewords $\bc_S$, where $\bc \in \mathcal{C}.$
Also the notion of information set remains as in the classical case.
\begin{definition}
Let  $\mathcal{C} \subseteq \left(\mathbb{Z}/p^s\mathbb{Z}\right)^n$  be a linear code of rank $K$, then a set $I \subseteq \{1, \ldots, n\}$ of size $K$ is called an \textbf{information set} of $\mathcal{C}$ if 
$\mid \mathcal{C}_I \mid = \mid \mathcal{C} \mid. $
\end{definition}

In this paper we are interested in the Lee metric, which can be thought of as the $L_1$ norm modulo $p^s.$
\begin{definition}
Let $x \in \mathbb{Z}/p^s\mathbb{Z}$. The \textbf{Lee weight} of $x$ is given by \[\text{wt}_L(x) = \min\{x, \mid p^s -x \mid\}.\]
The Lee weight of a vector is then defined additively, i.e., for $\bx \in \left(\mathbb{Z}/p^s\mathbb{Z}\right)^n$, we have \[\text{wt}_L(\bx) = \sum_{i=1}^n \text{wt}_L(x_i).\]
Finally, this weight induces a distance, that is, for $\bx,\by \in \left(\mathbb{Z}/p^s\mathbb{Z}\right)^n$ the \textbf{Lee distance} between $\bx$ and $\by$ is given by $d_L(\bx,\by) = \text{wt}_L(\bx-\by).$
\end{definition}
Let us denote by $M \coloneqq \lfloor \frac{p^s}{2}\rfloor$, then one can easily see that for $x \in \mathbb{Z}/p^s\mathbb{Z}$ we have $0 \leq \text{wt}_L(x) \leq M.$
The Lee-metric ball, respectively the Lee-metric sphere of radius $r$ around $\bx \in \left(\mathbb{Z}/p^s\mathbb{Z}\right)^n$ are defined as 
\begin{align*} 
B(\bx,r,n,p^s) & = \{\by \in \left(\mathbb{Z}/p^s\mathbb{Z}\right)^n \mid d_L(\bx-\by) \leq r\}, \\
S(\bx,r,n,p^s) &= \{\by \in \left(\mathbb{Z}/p^s\mathbb{Z}\right)^n \mid d_L(\bx-\by) = r\}. \end{align*}
Since the size of a Lee-metric ball or a Lee-metric sphere is independent of the center, we will denote their cardinalities by 
\begin{align*}
    V(r,n,p^s) = \mid B(0,r,n,p^s) \mid, & \qquad
    F(r,n,p^s)  = \mid S(0,r,n,p^s) \mid.
\end{align*}

\begin{definition}
Let  $\mathcal{C} \subseteq \left(\mathbb{Z}/p^s\mathbb{Z}\right)^n$  be a linear code endowed with the Lee-metric, then the \textbf{minimum Lee distance} of $\mathcal{C}$ is given by 
\[d_L(\mathcal{C}) = \min\{ d_L(\bx,\by) \mid \bx \neq \by \in \mathcal{C}\}.\]
\end{definition}

Since in this paper we are interested in algorithms that have as input a code generated by a matrix chosen uniformly at random, due to the result in \cite[Proposition 16]{free}, we are allowed to assume that our code is free, i.e., $k=K$ and a generator matrix and a parity-check matrix have up to permutations of columns the following form 
\begin{align*}
    \bG = \begin{pmatrix} \text{Id}_k & \bA \end{pmatrix}, \ \bH = \begin{pmatrix} \text{Id}_{n-k} & \bB \end{pmatrix},
\end{align*} where $\bA \in \left( \mathbb{Z}/p^s \mathbb{Z}\right)^{k \times (n-k)}$ and $\bB \in \left(\mathbb{Z}/p^s\mathbb{Z}\right)^{(n-k) \times k}.$

In addition, in \cite[Theorem 20]{free} it was shown that such a random code also attains with high probability the Gilbert-Varshamov bound.

Let $AL(n,d,p^s)$ denote the maximal cardinality of a code $\mathcal{C} \subseteq (\mathbb{Z}/p^s\mathbb{Z})^n$ of minimum Lee distance $d$ and let us  consider the maximal information rate 
\[R(n,d,p^s) := \frac{1}{n} \log_{p^s}(AL(n,d,p^s)),\]
for $0 \leq d \leq n M$. We define the relative minimum distance to be  $\delta := \frac{d}{nM}.$

\begin{theorem}[Asymptotic Gilbert-Varshamov Bound \cite{leeas}]\label{asympt_GV}
It holds that
\[\liminf\limits_{n \to \infty}R(n,\delta Mn, p^s) \geq \lim\limits_{n \to \infty} \left( 1 - \frac{1}{n} \log_{p^s}(  V( \delta Mn, n,p^s)  ) \right).\]
\end{theorem}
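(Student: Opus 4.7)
The statement is the asymptotic Gilbert--Varshamov lower bound on the achievable rate for linear codes over $\zps$ in the Lee metric, so I would follow the classical GV template: produce a non-asymptotic existence result of the form ``there is a linear code of rank $k$ with minimum Lee distance at least $d$ whenever $(p^s)^k \cdot V(d-1,n,p^s) \leq (p^s)^n$'', take base-$p^s$ logarithms, and pass to the limit with $d = \lceil \delta M n \rceil$.

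For the existence step I would use a random parity-check matrix. Sample $\bH \in \zps^{(n-k)\times n}$ with i.i.d.\ uniform entries and form $\mathcal{C} = \ker \bH$. By a union bound, the expected number of non-zero codewords of Lee weight at most $d-1$ is
$$
\expect\bigl[\#\{\bx \neq \bzero : \mathrm{wt}_L(\bx)\leq d-1,\; \bH\bx^T=\bzero\}\bigr]
= \sum_{\bx \neq \bzero,\,\mathrm{wt}_L(\bx)\leq d-1} \prob[\bH\bx^T = \bzero].
$$
If this quantity is strictly less than $1$, then some realisation of $\bH$ gives a code of minimum Lee distance at least $d$ and of rank at least $n-k$; hence $|\mathcal{C}| \geq (p^s)^{n-k}$ and $R(n,d,p^s) \geq 1 - k/n$.

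The main obstacle, and the only non-standard input, is evaluating $\prob[\bH\bx^T=\bzero]$ over the ring $\zps$, where not every non-zero element is a unit. I would handle this by stratifying $\bx$ by its $p$-divisibility level $t(\bx) = \max\{t\geq 0 : p^t \mid \bx\}$: writing $\bx = p^{t(\bx)}\by$ with $\by$ containing at least one unit entry, independence across rows and uniformity of $\bh_i\cdot\by^T$ on $\zps$ give $\prob[\bH\bx^T=\bzero] = p^{(t(\bx)-s)(n-k)}$. Splitting the union bound as $\sum_{t=0}^{s-1} (\text{count of } \bx \text{ with } t(\bx)=t,\ \mathrm{wt}_L(\bx)\leq d-1) \cdot p^{(t-s)(n-k)}$ and observing that the vectors with $t(\bx)\geq t$ form the sub-lattice $p^t \zpsk{n}$, which is in Lee-weight-preserving bijection with a ball over the smaller ring $\mathbb{Z}/p^{s-t}\mathbb{Z}$ and whose contribution is geometrically dominated by the $t=0$ stratum for free ambient rates, the total expectation is bounded above by $V(d-1,n,p^s)\cdot (p^s)^{-(n-k)}$ up to a factor that is sub-exponential in $n$.

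Choosing $k$ as the largest integer with $V(d-1,n,p^s) \leq (p^s)^{n-k}$ therefore yields the rate bound $R(n,d,p^s) \geq 1 - \tfrac{1}{n}\log_{p^s}(V(d-1,n,p^s)) - o(1)$. Since $V$ is non-decreasing in the radius, $V(d-1,n,p^s) \leq V(d,n,p^s)$, so this strengthens to $R(n,d,p^s) \geq 1 - \tfrac{1}{n}\log_{p^s}(V(d,n,p^s)) - o(1)$. Substituting $d = \lceil \delta M n \rceil$ and taking $\liminf$ as $n \to \infty$ produces exactly the stated inequality, provided the limit on the right-hand side exists; this last existence is the asymptotic entropy-type content of the Lee-ball enumeration developed in the earlier references, and is the reason the right-hand side is written as a bona fide limit rather than a liminf.
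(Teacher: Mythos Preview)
The paper does not prove this theorem at all; it is quoted verbatim from \cite{leeas} and used as a black box. So there is no ``paper's proof'' to compare against, and your task reduces to producing a self-contained argument.

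Your overall template (non-asymptotic existence $+$ take logs $+$ pass to the limit) is the right one, but note that $AL(n,d,p^s)$ and $R(n,d,p^s)$ are defined in the paper for \emph{arbitrary} codes, not linear ones. For arbitrary codes the Gilbert greedy argument gives $|\mathcal{C}|\cdot V(d-1,n,p^s)\geq (p^s)^n$ in one line, with no ring-theoretic obstacle whatsoever; the random-parity-check route you chose is strictly harder than what the statement demands.

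If you do insist on the linear route, two points in your stratification step are not right as written. First, the bijection between $p^t\zpsk{n}$ and $(\mathbb{Z}/p^{s-t}\mathbb{Z})^n$ is \emph{not} Lee-weight-preserving: one has $\mathrm{wt}_L(p^t\by)=p^t\cdot \mathrm{wt}_L^{(s-t)}(\by)$, so the stratum-$t$ count is $V(\lfloor (d-1)/p^t\rfloor,n,p^{s-t})$, not a ball of the same radius. Second, and more seriously, the claim that the $t\geq 1$ strata are ``geometrically dominated by the $t=0$ stratum'' is false in general: the probability factor $(p^{s-t})^{-(n-k)}$ is \emph{larger} than $(p^s)^{-(n-k)}$, and for small $d$ the $t\geq 1$ contribution can exceed the $t=0$ one. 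What you actually need is that each stratum separately imposes the constraint $1-R\geq h_{s-t}(\delta M/p^t)$ (with $h_j$ the normalized Lee-ball entropy over $\mathbb{Z}/p^j\mathbb{Z}$), and that the $t=0$ constraint is the binding one; establishing this monotonicity is the genuine work in the ring case and cannot be waved away as ``sub-exponential''. Either supply that comparison, or bypass it entirely via the greedy argument for unrestricted codes.
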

We compute the asymptotic ball size, i.e., $\lim\limits_{n \to \infty}   \frac{1}{n} \log_{p^s}(  V( \delta Mn, n,p^s)  )$, in Section \ref{sec:spheres}.\\

Let  $\mathcal{C} \subseteq \left(\mathbb{Z}/p^s\mathbb{Z}\right)^n$  be a linear code with parity-check matrix $\bH, $ then for an $\bx \in \left(\mathbb{Z}/p^s\mathbb{Z}\right)^n$ we say that $\bs= \bx\bH^\top$ is a \textbf{syndrome}.
In this paper we give an algorithm that solves the following problem, called Lee syndrome decoding problem (LSDP), which was shown to be NP-complete in \cite{leenp}:
\begin{problem}\label{prob:LSDP}
 Let $\bH \in \left(\mathbb{Z}/p^s\mathbb{Z}\right)^{(n-k)\times n}, \bs \in \left(\mathbb{Z}/p^s\mathbb{Z}\right)^{n-k}$ and $t \in \mathbb{N}.$ Find $\be \in \left(\mathbb{Z}/p^s\mathbb{Z}\right)^n$ such that $\bs=\be\bH^\top$ and $\text{wt}_L(\be)=t.$  
\end{problem}
 
 To this end, we assume that the input parity-check matrix $\bH$ is chosen uniformly at random in $\left(\mathbb{Z}/p^s\mathbb{Z}\right)^{(n-k)\times n}$ and that there exists a solution  $\be \in \left(\mathbb{Z}/p^s\mathbb{Z}\right)^n$, which was chosen uniformly at random in $S(0,t,n,p^s)$ and set $\bs$ to be its syndrome $\bs= \be\bH^\top.$ 
 We provide two new algorithms, taking care of two different scenarios. In the first scenario, we want to decode up to the minimum distance of the code having $\bH$ as parity-check matrix.  For this, we let  $d_L$ be the minimum distance from the Gilbert-Varshamov bound, then even if we assume full distance decoding, i.e., $t=d_L$,  we expect to have a unique solution $\be$ to Problem \ref{prob:LSDP}. In fact, the expected number of solutions to the LSDP is given by 
 \[N=\frac{F(t,n,p^s) }{p^{s(n-k)}} = \frac{F(d_L,n,p^s) }{p^{s(n-k)}} \leq 1. \]

In the second scenario, we consider a Lee weight $t$ which is  beyond the minimum distance, and solve this new problem by reversing the idea of the first algorithm. The main idea of these new algorithms is to use the results of \cite{distribution}, which provide us with additional information on the unique solution $\be \in \left(\mathbb{Z}/p^s\mathbb{Z}\right)^n$. For example, the expected number of entries of $\be$ having a fixed Lee weight.

\subsection{Asymptotics of Lee Spheres} \label{sec:spheres}
In the complexity analysis of our algorithm, we are interested in the asymptotic size of Lee spheres, Lee balls, and some types of restricted Lee spheres. All these quantities can be described using generating functions, and their limit for $n$ going to infinity can be computed using the saddle point technique used in \cite{saddle}.

Let $\Phi(x) = f(x)^n g(x)$ be a generating function, where $f(x)$ and $g(x)$ do not depend on $n$.  Let us denote by $[x^t]\Phi(x)$ the coefficient of $x^t$ in $\Phi(x).$ We want to estimate this coefficient for $t = Tn$ for some fixed $T \in (0,1)$.

\begin{lemma}[\text{\cite[Corollary 1]{saddle}}]\label{lemGardySole}
Let $\Phi (x) = f(x)^n g(x)$ with $f(0)\neq 0$, and $t(n)$ be a function in $n$. Set $T := \lim_{n\rightarrow \infty}t(n)/n$ and set $\rho$ to be the solution to 
\[ \Delta(x):= \frac{x f'(x)}{f(x)} = T .\]
If $ \Delta'(\rho) >0$, and the modulus of any singularity of $g(x)$ is larger than $\rho$, then for large $n$
\[ \frac{1}{n} \log_{p^s}( [x^{t(n)}]\Phi(x)) \approx \log_{p^s}(f(\rho)) - T \log_{p^s}(\rho) + o(1) .\]
\end{lemma}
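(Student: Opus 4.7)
The plan is to prove this through the saddle-point method applied to Cauchy's coefficient formula. Since $f(0)\neq 0$ and every singularity of $g$ has modulus strictly greater than $\rho$, the product $\Phi(x)=f(x)^n g(x)$ is analytic in a neighborhood of the closed disk $\{|x|\leq \rho\}$, so I may write
\[
[x^{t(n)}]\Phi(x) \;=\; \frac{1}{2\pi i}\oint_{|x|=\rho}\frac{f(x)^n g(x)}{x^{t(n)+1}}\,dx \;=\; \frac{1}{2\pi\,\rho^{t(n)}}\int_{-\pi}^{\pi} f(\rho e^{i\theta})^n\, g(\rho e^{i\theta})\, e^{-it(n)\theta}\,d\theta.
\]
The phase is controlled by $h(x):= n\log f(x)-t(n)\log x$, whose critical point on the positive real axis satisfies $x f'(x)/f(x)=t(n)/n$. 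In the limit $n\to\infty$ this is exactly $\Delta(x)=T$, so the prescribed $\rho$ is the asymptotic saddle.

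Next I would carry out the Laplace-type expansion around $\theta=0$. A second-order Taylor expansion of $\log f(\rho e^{i\theta})-i(t(n)/n)\theta$ at $\theta=0$ produces a linear term that vanishes at the saddle (up to an $o(1)$ discrepancy coming from $t(n)/n-T$) and a quadratic term whose coefficient is, after a direct computation, proportional to $\Delta'(\rho)$. The hypothesis $\Delta'(\rho)>0$ ensures that this quadratic coefficient has the correct sign, so the integrand is locally dominated by a Gaussian factor $\exp(-cn\theta^2)$ with $c>0$. Combined with the uniform off-saddle bound $|f(\rho e^{i\theta})|\leq f(\rho)-\delta$ for $|\theta|$ bounded away from $0$, this reduces the integral to a standard Gaussian and yields
\[
[x^{t(n)}]\Phi(x) \;=\; \frac{f(\rho)^n g(\rho)}{\rho^{t(n)}}\cdot\frac{1+o(1)}{\sqrt{2\pi n\,\Delta'(\rho)}}.
\]

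Finally, I take $\log_{p^s}$ of both sides, divide by $n$, and use the fact that $g(\rho)$ and $\Delta'(\rho)$ do not depend on $n$, to get
\[
\frac{1}{n}\log_{p^s}\bigl([x^{t(n)}]\Phi(x)\bigr) \;=\; \log_{p^s} f(\rho) - \frac{t(n)}{n}\log_{p^s}\rho + O\!\left(\frac{\log n}{n}\right).
\]
Since $t(n)/n\to T$, the right-hand side converges to $\log_{p^s} f(\rho) - T\log_{p^s}\rho$ with an $o(1)$ error, as claimed. The main obstacle, and the step requiring the most care, is the uniform off-saddle estimate showing that the modulus of the integrand is strictly smaller than its value at $\theta=0$ outside a shrinking neighborhood: this fails when $f$ has the degenerate form $c x^k$ (so that $|f(\rho e^{i\theta})|=f(\rho)$ for every $\theta$), but is automatic under the mild assumption that $f$ has at least two distinct monomials, which will hold for the generating functions arising from Lee-sphere counts in the sequel.
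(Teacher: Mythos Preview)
The paper does not prove this lemma at all: it is quoted verbatim as \cite[Corollary 1]{saddle} and used as a black box throughout Section~\ref{sec:spheres}. Your saddle-point argument via Cauchy's integral formula is the standard derivation and is essentially what the cited reference of Gardy and Sol\'e contains, so there is nothing to compare against in the present paper; your sketch is correct in outline and would serve as a self-contained justification if one were desired.
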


\subsubsection{(Restricted) Lee spheres}
The generating functions of the sizes of Lee spheres and Lee balls are known to be $\Phi(x) = f(x)^n$ and $\Phi'(x) = \frac{f(x)^n}{1-x}$, respectively, where \[f(x):=\left\{\begin{array}{ll} 1+2\sum_{i=1}^M x^i & \text{ if } p \neq 2, \\ 1+2\sum_{i=1}^{M-1} x^i + x^M & \text{ if } p=2. \end{array}\right.  \]
 It clearly follows that $F(t,n,p^s) = [x^t] \Phi(x)$ and $V(t,n,p^s) = [x^t] \Phi'(x)$. 
 The asymptotics of these sizes have been computed in \cite{saddle,leenp}. 
 
 In general, we can also compute the sizes of the restricted Lee spheres,   where each entry has the Lee weight smaller than $r$, respectively larger than $r$, for some $r \in \{0,\ldots,M\}$ 
 \begin{align*}
 F_{(r)}(t, n, p^s) &= \mid \ \{\bx \in \{0,  \pm 1, \ldots, \pm r\}^n \mid \text{wt}_L(\bx)=t \ \text{in} \ \mathbb{Z}/p^s\mathbb{Z}\} \ \mid,  \\
 F^{(r)}( t, n, p^s) &= \mid \ \{\bx \in \{  \pm r, \ldots, \pm M\}^n \mid \text{wt}_L(\bx) = t \ \text{in} \ \mathbb{Z}/p^s\mathbb{Z}\} \ \mid. \end{align*}
 
 The generating function of the size of the restricted Lee sphere $F_{(r)}(t, n, p^s)$ is given by $\Phi_{(r)}(x) = f_{(r)}(x)^n$, 
 where \[f_{(r)}(x) := \left\{\begin{array}{ll}  1+2\sum_{i=1}^{M-1} x^i + x^M & \text{ if } p=2 \ \text{and} \ r =M, \\ 1+2\sum_{i=1}^r x^i &  \text{ otherwise}. \end{array}\right.  \]

Whereas, for 
 $F^{(r)}(t,n,p^s)$ the generating function is given by  $\Phi^{(r)}(x)=f^{(r)}(x)^n,$ where
\begin{align*} 
f^{(r)}(x) = \begin{cases} f_{(M)}(x) &  \text{ if } r=0, \\ 2\sum_{i=r}^{M-1} x^i + x^M & \text{ if } p=2 \text{ and } r>0, \\ 2\sum_{i=r}^{M} x^i  & \text{ if } p\neq 2 \text{ and } r>0. \end{cases}
\end{align*}
Note that the coefficient of $x^t$ in $\Phi^{(r)}(x)$ is equal to the coefficient of $x^{t-rn}$ in $\Psi^{(r)}(x) = g^{(r)}(x)^n$, where 
\[g^{(r)}(x) := \left\{\begin{array}{ll}  f_{(M)}(x) &  \text{ if } r=0, \\ 2\sum_{i=0}^{M-1-r} x^i + x^{M-r} & \text{ if } p=2 \text{ and } r>0, \\ 2\sum_{i=0}^{M-r} x^i  & \text{ if } p\neq 2 \text{ and } r>0. \end{array}\right.  \]
 
In particular, we have that $F_{(r)}(t,n,p^s) = [x^t]\Phi_{(r)}(x)$ and $F^{(r)}(t,n,p^s) = [x^{t-rn}]\Psi^{(r)}(x).$ Note that $F_{(M)}(t,n,p^s) = F^{(0)}(t,n,p^s) = F(t,n,p^s)$. Using Lemma \ref{lemGardySole}, we get the following asymptotic behavior of restricted Lee spheres.

\begin{corollary}\label{asympt_V} Let $T \in [0,M)$ and $t = t(n)$ be a function of $n$ such that $t(n) := Tn$ for large $n$. Then,
\begin{enumerate}
    \item for $p \neq 2$ or $r < M$, we get
    \[\lim\limits_{n \to \infty}  \frac{1}{n} \log_{p^s}( F_{(r)}(t(n),n, p^s))=\log_{p^s}(f_{(r)}(\rho)) - T \log_{p^s}(\rho),\]
    where $\rho$ is the unique real positive solution of $2 \sum_{i=1}^r (i-T) x^i = T$ and \[f_{(r)}(\rho)= 1 + 2\sum_{i=1}^r \rho^i = \frac{r(\rho+1)+1}{(1-\rho)(r-T)+1},\]
    \item for $p=2$ and $r=M$, respectively $r'=0$, we get
    \begin{align*}
    \lim\limits_{n \to \infty}  \frac{1}{n} \log_{p^s}( F_{(r)}( t(n), n,p^s)) = \log_{p^s}(f_{(r)}(\rho)) - T \log_{p^s}(\rho),
    \end{align*}
    where $\rho$ is the unique real positive solution of $2 \sum_{i=1}^{M-1} (i-T) x^i + (M-T) x^M = T$ and 
    \begin{align*}
       g^{(r')}(\rho) & = f_{(r)}(\rho)  = 1 + 2 \sum_{i=1}^{M-1} {\rho}^i + {\rho}^M  \\& = \frac{{\rho}^{M+1}(T-M) + {\rho}^M(T-M+1) + {\rho}(T-M) + T +M +1}{{\rho}(T-M) + M+1-T},
    \end{align*}

    \item for $p=2$ and $0<r \leq T$, we get 
    \[\lim\limits_{n \to \infty}  \frac{1}{n} \log_{p^s}( F^{(r)}(t(n),n, p^s))=\log_{p^s}(g^{(r)}(\rho)) - (T-r) \log_{p^s}(\rho),\]
    where $\rho$ is the unique real positive solution of 
    \[2\sum_{i=1}^{M-1-r} (i-T+r) x^i + (M-T) x^{M-r}=2(T-r),\]
    and
    \[g^{(r)}(\rho)= 2\sum_{i=0}^{M-1-r} \rho^i + \rho^{M-r} = \frac{\rho^{M-r+1}+\rho^{M-r} -2}{\rho-1},\]
    
        \item for $p\neq 2$ and $0<r\leq T$, we get 
    \[\lim\limits_{n \to \infty}  \frac{1}{n} \log_{p^s}( F^{(r)}(t(n),n, p^s))=\log_{p^s}(g^{(r)}(\rho)) - (T-r) \log_{p^s}(\rho),\]
    where $\rho$ is the unique real positive solution of $2\sum_{i=1}^{M-r} (i-T+r)x^i= 2(T-r) $  and
    \[ g^{(r)}(\rho)= 2 \sum_{i=0}^{M-r} \rho^i =  \frac{2\rho^{M-r+1}-2}{\rho-1} .\]
\end{enumerate}
\end{corollary}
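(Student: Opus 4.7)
The plan is to apply Lemma \ref{lemGardySole} verbatim to each of the four generating functions $\Phi_{(r)}(x)$ and (after a shift) $\Psi^{(r)}(x)$ introduced just above the statement, taking the ``extra factor'' $g(x)$ to be the constant $1$ in every case. Since $g \equiv 1$ has no singularities at all, the singularity hypothesis of the lemma is automatic; the nontrivial part is to identify the saddle point $\rho$ from the equation $\Delta(x) = x f'(x)/f(x) = T$, to verify that it is the unique positive real solution with $\Delta'(\rho) > 0$, and to rewrite $f(\rho)$ in the closed form claimed.

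For Case (1), with $f(x) = f_{(r)}(x) = 1 + 2\sum_{i=1}^r x^i$, first I would compute
\[
\Delta(x) = \frac{x f'(x)}{f(x)} = \frac{2\sum_{i=1}^r i\, x^i}{1 + 2\sum_{i=1}^r x^i},
\]
clear denominators and obtain the saddle point equation $2\sum_{i=1}^r (i-T)x^i = T$ directly. Existence and uniqueness of a positive real root $\rho$ follow from the fact that the left-hand side equals $0$ at $x=0$, is strictly increasing in $x$ on a neighborhood of $0$ once $T < r$, and diverges to $+\infty$; monotonicity of $\Delta$ on the interval under consideration then gives $\Delta'(\rho) > 0$. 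Case (2) (i.e.\ $p=2$, $r = M$) is identical except that one extra term $x^M$ appears with coefficient $1$ in $f_{(M)}$, and the saddle point equation picks up the corresponding $(M-T)x^M$ term.

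For Cases (3) and (4), the key observation recorded just before the statement is that
\[
F^{(r)}(t,n,p^s) = [x^{t-rn}]\Psi^{(r)}(x), \qquad \Psi^{(r)}(x) = g^{(r)}(x)^n,
\]
so Lemma \ref{lemGardySole} applies with $f = g^{(r)}$ and target $T' := T - r$ instead of $T$. The saddle point equation $x (g^{(r)})'(x) / g^{(r)}(x) = T - r$ simplifies to the two forms stated. The asymptotic estimate produced by the lemma is
\[
\log_{p^s}(g^{(r)}(\rho)) - (T-r)\log_{p^s}(\rho),
\]
which is exactly the desired expression once we remember that the shift $t - rn$ contributes the coefficient $T - r$ on the logarithm of $\rho$.

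Finally, I would derive the closed forms for $f_{(r)}(\rho)$ and $g^{(r)}(\rho)$ from the geometric-series identity $\sum_{i=0}^{\ell} \rho^i = (\rho^{\ell+1}-1)/(\rho-1)$ together with the saddle point equation, which lets one eliminate the sum $\sum_{i=1}^r i \rho^i$ in favor of $\sum_{i=1}^r \rho^i$ and $T$. The main obstacle I anticipate is precisely this algebraic cleanup in Case (1): the raw expression $f_{(r)}(\rho) = 1 + 2(\rho^{r+1}-\rho)/(\rho-1)$ must be reduced, using the constraint $2\sum_{i=1}^r(i-T)\rho^i = T$, to the compact form $(r(\rho+1)+1)/((1-\rho)(r-T)+1)$, and showing that this rearrangement is valid (and does not introduce spurious zeros of the denominator) takes some care; the analogous steps for Case (2) and for the $g^{(r)}$ formulas in Cases (3) and (4) are then routine variations. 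Uniqueness and positivity of $\rho$, although standard, should be stated explicitly so that the hypothesis $\Delta'(\rho)>0$ of Lemma \ref{lemGardySole} is visibly satisfied in each case.
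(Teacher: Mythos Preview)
Your proposal is correct and follows exactly the same route as the paper: apply Lemma \ref{lemGardySole} directly to $\Phi_{(r)}(x)$ for parts (1) and (2) and to the shifted generating function $\Psi^{(r)}(x)$ (with target $T-r$) for parts (3) and (4). In fact you supply considerably more detail than the paper's own proof, which simply invokes the lemma and refers to \cite[Lemma 2.6]{leenp} for the analogous computation; your explicit verification of the saddle-point equation and the algebraic simplification of $f_{(r)}(\rho)$ and $g^{(r)}(\rho)$ are welcome additions.
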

\begin{proof}
For the parts 1 and 2, we apply Lemma \ref{lemGardySole} to the generating function $\Phi_{(r)}(x)$ and obtain the mentioned results, similar to $r=M$ for $f_{(r)}$ case proved in \cite[Lemma 2.6]{leenp}. For the parts 3 and 4, we 
apply Lemma \ref{lemGardySole} to the generating function $\Psi^{(r)}(x)$ and obtain the mentioned results. 
 
\end{proof}

\begin{remark}\label{rem} Note that, for $p$ odd (respectively, even), we get $T \geq M(M+1)/(2M+1)$ (respectively, $T \geq M/2$) if and only if \[ \lim\limits_{n \to \infty}  \frac{1}{n} \log_{p^s}(  V( Tn, n,p^s)  ) =1.\]
Hence, if $0<R$, then a code that attains the asymptotic Gilbert-Varshamov bound has 
\[ \lim\limits_{n \to \infty}  \frac{1}{n} \log_{p^s}(  V( T n, n,p^s)  ) = 1-R <1,\]
and we immediately get that $T < M(M+1)/(2M+1)$ if $p$ is odd, or $T < M/2$ if $p$ is even. 
\end{remark}

\subsubsection{Restricted Compositions}

Let us denote by $C(v,t, \lambda,n,p^s)$ the number of weak compositions $\pi$ of $v$, which fit into the composition $\lambda$ of $t$ and both having $n$ part sizes. That is, the maximal part sizes of $\pi$ are given by $\lambda$, i.e.,  for all $i \in \{1, \ldots, n\}$ we have $\pi_i \leq \lambda_i$. Equivalently, the Young-Tableau of $\pi$ fits into the Young-Tableau of $\lambda$.
In addition, we have that $\lambda$ is a composition which has $n$ parts. The reason we are interested in this number is that we can think of the Lee weight 
 composition of a vector $\be \in \left(\mathbb{Z}/p^s\right)^n$ as
   $\lambda=(\lambda_1, \ldots, \lambda_{n}),$
   which is such that $\lambda_i=\text{wt}_L((\be)_i)$.
Let $V= \lim\limits_{n \to \infty} v(n)/n$. Then for $m= \max\{\lambda_i \mid i \in \{1, \ldots, n\}\}$ the number of weak compositions which fit into $\lambda$ have the
generating function \[\Phi(z) = \prod_{i=1}^n \left( \sum_{j=0}^{\lambda_i} z^j \right) =\prod_{i=1}^{m} \left(\sum_{j=0}^i z^j\right)^{c_i n},\] where  $c_i$  corresponds to the multiplicity of $i$ in the composition $\lambda$, i.e., there are $c_in$ entries of $\be \in \left( \mathbb{Z}/p^s\mathbb{Z}\right)^n$ which have Lee weight $i$. Thus $\Phi(z)=f(z)^n,$ for \[f(z)=\prod_{i=1}^{m} \left(\sum_{j=0}^i z^j\right)^{c_i}.\]  To get the asymptotics of $C(v,t, \lambda,n,p^s)$ we are interested in the coefficient of $z^v$ in $\Phi(z)$. Now using the saddle point technique of \cite{saddle} we define $\Delta(f(z))= \frac{zf'(z)}{f(z)}.$ Let $\rho$ be the unique positive real solution to $\Delta(f(z))=V.$ Then 
\[\lim\limits_{n \to \infty} \frac{1}{n} \log_{p^s} \left( C(v,t, \lambda,n,p^s) \right) = \log_{p^s}(f(\rho)) - V\log_{p^s}(\rho).\]

\begin{lemma}\label{asympt_smallballs} 
Let us consider a weak composition $\lambda=(\lambda_1, \ldots, \lambda_n)$ of $t$ with $\lim\limits_{n \to \infty} t(n)/n=T.$ In addition, let us consider a positive integer $v \leq t$ with $\lim\limits_{n \to \infty} v(n)/n=V.$ Let $m=\max\{\lambda_i \mid i \in \{1, \ldots, n\}\}.$
 If  $0 \leq V <M$, then
\[\lim\limits_{n \to \infty}  \frac{1}{n} \log_{p^s}(  C(v,t, \lambda,n,p^s) )=\log_{p^s}(f(\rho)) - V \log_{p^s}(\rho),\]
where $\rho$ is the unique real positive solution of \[\sum_{i=1}^m c_i \dfrac{z+2z^2+\cdots+iz^i}{1+z+\cdots+z^i} = V.\]

\end{lemma}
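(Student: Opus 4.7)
The plan is to apply Lemma \ref{lemGardySole} directly to the generating function $\Phi(z)=f(z)^n$ derived in the paragraph preceding the lemma, taking $g(z)=1$. Since each factor $1+z+\cdots+z^i$ equals $1$ at $z=0$, we have $f(0)=1\neq 0$, and $g\equiv 1$ has no singularities, so the structural hypotheses of Lemma \ref{lemGardySole} are automatic. The work therefore reduces to (i) identifying $\Delta(z):=zf'(z)/f(z)$ with the expression appearing in the statement, and (ii) showing that the saddle-point equation $\Delta(\rho)=V$ admits a unique positive real solution at which $\Delta'(\rho)>0$.

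Step (i) is immediate from logarithmic differentiation of $f(z)=\prod_{i=1}^{m}(1+z+\cdots+z^i)^{c_i}$:
$$\Delta(z)\;=\;z\,\frac{d}{dz}\log f(z)\;=\;\sum_{i=1}^{m} c_i\,\frac{z+2z^2+\cdots+iz^i}{1+z+\cdots+z^i},$$
which is exactly the saddle-point equation in the statement.

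The heart of the proof is step (ii), the strict monotonicity of $\Delta$ on $(0,\infty)$. The cleanest route, and the step I expect to be the main obstacle if attempted by a direct quotient-rule computation, is to interpret each summand
$$h_i(z)\;=\;\frac{z+2z^2+\cdots+iz^i}{1+z+\cdots+z^i}$$
as the mean $\expect[X_i]$ of a random variable $X_i$ on $\{0,1,\dots,i\}$ with $\prob(X_i=j)\propto z^j$. This is a one-parameter exponential family with natural parameter $\theta=\log z$ and log-partition function $\log(1+z+\cdots+z^i)$, so the standard identity $\partial_\theta\expect[X_i]=\variance(X_i)$ gives $h_i'(z)=\variance(X_i)/z>0$ for every $z>0$. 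Hence $\Delta$ is strictly increasing on $(0,\infty)$, with $\Delta(0)=0$ and $\lim_{z\to\infty}\Delta(z)=\sum_{i=1}^{m} i\,c_i=T$.

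Since $v\leq t$ forces $V\leq T\leq M$, and the boundary case $V=T$ corresponds to the trivial count $C(v,t,\lambda,n,p^s)=1$ (for which both sides of the claimed asymptotic vanish in the limit $\rho\to\infty$), one may assume $V<T$. The intermediate value theorem then produces a unique $\rho\in(0,\infty)$ with $\Delta(\rho)=V$, and $\Delta'(\rho)>0$ by strict monotonicity. Invoking Lemma \ref{lemGardySole} with this $\rho$, $f$ and $g\equiv 1$ delivers the stated asymptotic $\log_{p^s}(f(\rho))-V\log_{p^s}(\rho)$.
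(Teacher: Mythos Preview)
Your proposal is correct and follows essentially the same route as the paper: the paragraph immediately preceding the lemma already sets up $\Phi(z)=f(z)^n$ with $f(z)=\prod_{i=1}^{m}(1+z+\cdots+z^i)^{c_i}$ and invokes the saddle-point technique of Lemma~\ref{lemGardySole}, so the paper's own argument is nothing more than a direct application of that lemma. Your write-up in fact supplies more than the paper does, since you explicitly verify the hypothesis $\Delta'(\rho)>0$ via the exponential-family/variance interpretation of each summand $h_i$, whereas the paper simply asserts uniqueness of $\rho$ without justification.
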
 
 
\section{Distribution of a Random Lee Vector}\label{sec:distr}
In this section, we analyze the error vector $\be$ that is chosen uniformly at random from $S(0,t,n,p^s).$ We first recall the results from \cite{distribution} that studies the distribution of the entries of a random vector having a fixed Lee weight.

Let $E$ denote a random variable corresponding to the realization of an entry of $\be$. As $n$ tends to infinity we have the following result on the distribution of the elements in $\be$.
\begin{lemma}[{\cite[Lemma 1]{distribution}}]\label{lem:marginal}
    For any $j \in \zps$,
    \begin{align}\label{equ:marginal}
        \prob (E = j) = \frac{1}{Z(\beta)} \exp(-\beta \lweight{j}),
    \end{align}
    where $Z$ denotes the normalization constant and $\beta$ is the unique real solution to the constraint $t/n = \sum_{i = 0}^{p^s-1} \lweight{i} \prob(E = i)$.
\end{lemma}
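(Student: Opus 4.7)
The plan is to treat $\prob(E=j)$ as an exact combinatorial ratio and extract its asymptotic behaviour using the same generating function machinery already introduced in Section~\ref{sec:spheres}. Fixing the first coordinate, symmetry of the uniform distribution on $S(0,t,n,p^s)$ gives
\[
\prob(E = j) \;=\; \frac{F(t - \lweight{j},\, n-1,\, p^s)}{F(t,n,p^s)}.
\]
The key point is that $f(x) = \sum_{i \in \zps} x^{\lweight{i}}$, so both numerator and denominator are coefficients of powers of $f(x)$: explicitly, $F(t,n,p^s) = [x^t] f(x)^n$ and $F(t - \lweight{j},\, n-1,\, p^s) = [x^{t-\lweight{j}}] f(x)^{n-1}$.

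Next I would apply a refined (Hayman-type) saddle point expansion to each coefficient. Let $\rho$ be the unique positive real solution of $\rho f'(\rho)/f(\rho) = T$ where $T = \lim_{n} t(n)/n$; existence and uniqueness follow from strict monotonicity of $\Delta(x)=xf'(x)/f(x)$ on $(0,\infty)$. Evaluating the Cauchy integral $\frac{1}{2\pi i}\oint f(x)^n x^{-t-1}\,dx$ at $x=\rho$ yields
\[
F(t,n,p^s) \;=\; \frac{f(\rho)^n\, \rho^{-t}}{\sqrt{2\pi n\, \sigma^2}}\bigl(1+o(1)\bigr),
\]
with an analogous expression for $F(t-\lweight{j},\, n-1,\, p^s)$, whose saddle point converges to the same $\rho$ since $(t-\lweight{j})/(n-1) \to T$. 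Substituting and cancelling the polynomial prefactors gives
\[
\prob(E=j) \;\longrightarrow\; \frac{\rho^{\lweight{j}}}{f(\rho)}.
\]

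Setting $\beta := -\log \rho$ yields $\rho^{\lweight{j}} = \exp(-\beta \lweight{j})$ and $Z(\beta) := f(\rho) = \sum_{i \in \zps} \exp(-\beta \lweight{i})$, which is exactly the form stated in \eqref{equ:marginal}. The saddle point equation then translates verbatim into the constraint on $\beta$, because
\[
T \;=\; \frac{\rho f'(\rho)}{f(\rho)} \;=\; \sum_{i \in \zps} \lweight{i}\,\frac{\rho^{\lweight{i}}}{f(\rho)} \;=\; \sum_{i=0}^{p^s-1} \lweight{i}\,\prob(E=i).
\]
Uniqueness of $\beta$ is equivalent to strict monotonicity of $\beta \mapsto \expect_\beta[\lweight{E}]$, which follows from positivity of the tilted variance.

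The main obstacle is sharpening the logarithmic asymptotics of Lemma~\ref{lemGardySole} to the precise leading order required for the ratio to converge to the exact Boltzmann weight (rather than merely matching in exponential scale). This is handled by a standard local central limit theorem for the saddle point approximation, which guarantees that the $1/\sqrt{n}$ prefactors in numerator and denominator differ only by a $1+O(1/n)$ factor, and that the two saddle points (one for $(t,n)$, one for $(t-\lweight{j},n-1)$) lie close enough that ${f(\rho')^{n-1}}/{f(\rho)^n} \to 1/f(\rho)$ and ${\rho'}^{-(t-\lweight{j})}/\rho^{-t} \to \rho^{\lweight{j}}$. Once this analytic refinement is in place, every remaining step is a routine manipulation of $f$.
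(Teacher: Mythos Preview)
The paper does not give its own proof of this lemma; it is quoted verbatim from \cite{distribution} and simply recalled in Section~\ref{sec:distr}. So there is no in-paper argument to compare against.

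Your approach is correct and is one of the two standard routes to this kind of result. Writing $\prob(E=j)$ as the ratio $F(t-\lweight{j},n-1,p^s)/F(t,n,p^s)$ is exactly right, and extracting the limit via a sharpened saddle-point expansion is legitimate; you have correctly flagged that Lemma~\ref{lemGardySole} alone is too coarse (it only controls $\tfrac{1}{n}\log$ of each term) and that a local-CLT refinement is needed so the subexponential prefactors cancel. The identification $\beta=-\log\rho$ and the translation of the saddle equation into the mean constraint are both clean.

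The alternative route, which is what \cite{distribution} actually takes, is the method of types / maximum-entropy argument: one shows that the empirical distribution of entries of a uniform vector in $S(0,t,n,p^s)$ concentrates, as $n\to\infty$, on the distribution maximizing entropy subject to the mean-Lee-weight constraint $\sum_i \lweight{i}\,p_i = T$, and the Lagrange multiplier for that constraint is precisely $\beta$. Your generating-function derivation and the entropy-maximization derivation are dual to each other (the saddle point $\rho=e^{-\beta}$ is the exponential tilt that realizes the constrained maximum-entropy law), so the two arguments are equivalent in content. Your version has the advantage of reusing the analytic machinery already set up in Section~\ref{sec:spheres}; the method-of-types version avoids the need to control the $1/\sqrt{n}$ prefactors explicitly.
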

Note, that if $\beta = 0$, the entries of $\be$ are uniformly distributed over $\zps$. In that case, the relative weight of a randomly chosen entry is equal to $\frac{p^{2s}-1}{4p^s}$ if $p$ is odd, respectively $p^s/4$  if $p = 2$. Furthermore, if $\beta > 0$ the relative weight becomes smaller. In addition, since the marginal distribution \eqref{equ:marginal} is an exponential function with negative exponent, it is decreasing in the weight. This means that for $\beta > 0$ the elements of smallest Lee weight, i.e., $0$, are the most probable, then elements of weight $1$ until the least probable Lee weight $M$. Let us emphasize here that, by Remark \ref{rem}, if we are in scenario 1, where we decode up to the minimum distance given by the Gilbert-Varshamov bound, we will always have $t/n \leq M/2$, which is roughly the threshold $\frac{p^{2s}-1}{4p^s}$ for $p$ odd, respectively  $p^s/4$ for $p = 2$, and hence elements of small weight will always be more probable. In the other case, where $\beta<0$, the elements of largest Lee weight, i.e., $M$, are the most probable, followed by the elements of weight $M-1$, and so on, until the least probable of Lee weight $0$. This is the case for the second scenario, where we decode  beyond the minimum distance, i.e., $t/n \geq M/2.$
\begin{figure}
    \centering
    \begin{tikzpicture}[scale=0.7]
	\begin{axis}[
	width=18cm,
	height=15cm,
	legend pos=north east,
	grid=both,
	grid style={dotted,gray!50},
	xmin=-23,
	xmax=23,
	ymin=0,
	ymax=0.2,
	ylabel={Probability},
	xlabel={Elements of $\mathbb{Z}/47\mathbb{Z}$},
	y label style={at={(0.0,0.5)}},
	legend cell align=left,
	legend style={font=\footnotesize},
	]
	
	\addplot[blue!70!cyan,solid,mark=o,mark options={solid}] table[x=n,y=prob] {./dists_q47_delta3.0.txt};\addlegendentry{$T = 3$}
	\addplot[olive!70!green,solid,mark=+,mark options={solid}] table[x=n,y=prob] {./dists_q47_delta8.0.txt};\addlegendentry{$T = 8$}
	\addplot[orange!70!yellow,solid,mark=diamond,mark options={solid}] table[x=n,y=prob] {./dists_q47_delta11.7447.txt};\addlegendentry{$T = \frac{(p^s)^2-1}{4p^s} \approx 11.7447$}
	\addplot[violet!70!black,solid,mark=star,mark options={solid}] table[x=n,y=prob] {./dists_q47_delta16.0.txt};\addlegendentry{$T = 16$}

	\end{axis}
\end{tikzpicture}
    \caption{Marginal distribution for the elements in $\mathbb{Z}/47\mathbb{Z}$ for different values of  $T = \lim_{n \to \infty} t(n)/n$.}
    \label{fig:marg_dist}
\end{figure}
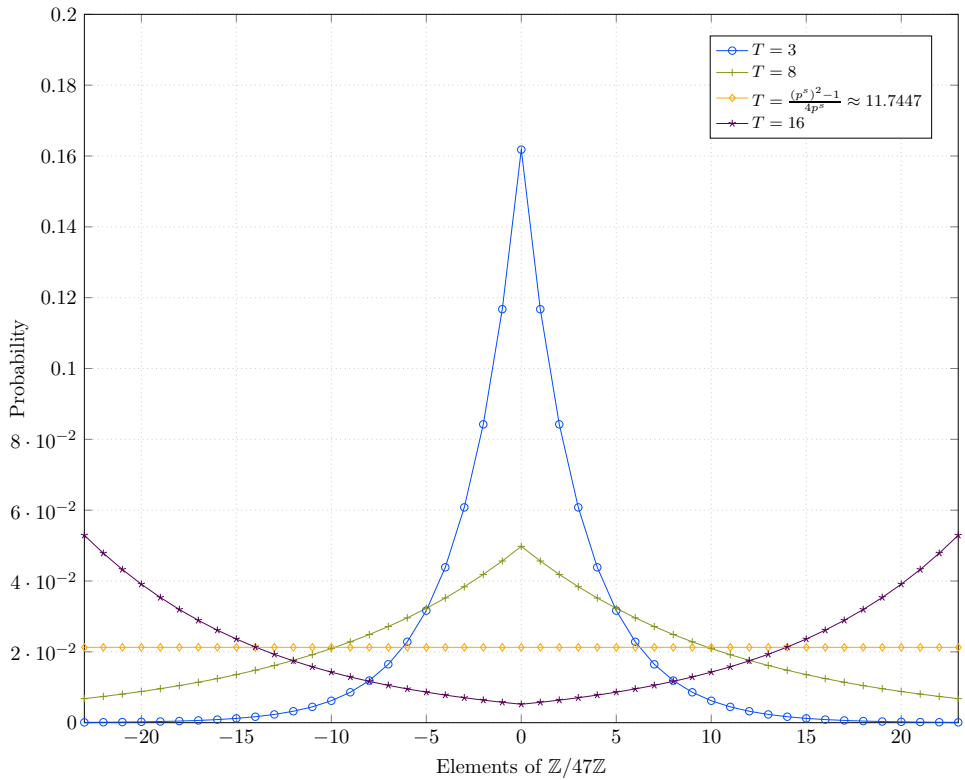
As a direct consequence of Lemma \ref{lem:marginal}, we can give the probability of a random entry $E$ having some given Lee weight
\begin{align}\label{equ:prob_LW}
    \prob (\lweight{E} = j) 
        =
        \begin{cases}
            \prob( E = j) & \text{if } (j = 0) \text{ or } (j = M \text{ and } q \text{ is even}), \\
            2 \prob( E = j) & \text{else}. 
        \end{cases}
\end{align}

In this work, we are interested in the expected number of entries that have 'large' Lee weight, i.e., entries having Lee weight larger than a threshold $r \in \{1,\ldots,M-1\}$. 
Let $\psi(r,t,n,p^s)$ denote the expected number of entries of $\be$ which have Lee weight larger than $r$ and let $\varphi(r,t,n,p^s)$ denote the expected Lee weight of $\be$ without the entries of larger Lee weight than $r$. In addition, for some randomly chosen set $S \subseteq \{1, \ldots, n\}$ of size  $0  \leq \ell \leq n$, let us denote by $\sigma(\ell,t,n,p^s)$ the expected support size of $\be_S$. 

\begin{lemma} Let $\be$ be chosen  uniformly at random  in $S(0,t,n,p^s)$,  $r \in \{0,\ldots,M\}$ and $0 \leq \ell \leq n$. Then
\begin{align*}
\psi(r,t,n,p^s) & = n \sum_{i=r+1}^M \prob (\lweight{E} = i), \\
\varphi(r,t,n,p^s) & = n \sum_{i=0}^{r} i \cdot \prob(\lweight{E}=i), \\
\sigma(\ell,t,n,p^s) & = \ell \sum_{i=1}^M \prob (\lweight{E} = i).
\end{align*}
\end{lemma}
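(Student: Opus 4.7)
The plan is to prove all three identities by a direct application of linearity of expectation, combined with Lemma~\ref{lem:marginal} and equation~\eqref{equ:prob_LW}. The central observation is that since $\be$ is drawn uniformly from $S(0,t,n,p^s)$, the distribution is invariant under permutations of coordinates, so the marginal law of each entry $e_i$ is the same and coincides with the distribution of the random variable $E$ described in Lemma~\ref{lem:marginal}.

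First, for $\psi(r,t,n,p^s)$, I would introduce indicator variables $X_i := \mathbf{1}\{\lweight{e_i} > r\}$ for $i = 1,\ldots,n$, so that the number of entries of $\be$ with Lee weight strictly greater than $r$ equals $\sum_{i=1}^n X_i$. By linearity of expectation and the permutation-symmetry of the uniform distribution on the Lee sphere,
\[
\psi(r,t,n,p^s) = \sum_{i=1}^n \expect[X_i] = n \cdot \prob(\lweight{E} > r) = n \sum_{i=r+1}^M \prob(\lweight{E}=i).
\]

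Next, for $\varphi(r,t,n,p^s)$, I would repeat the argument with $Y_i := \lweight{e_i}\cdot \mathbf{1}\{\lweight{e_i}\leq r\}$. The total Lee weight contributed by entries of weight at most $r$ is $\sum_{i=1}^n Y_i$, and linearity of expectation together with the common marginal gives
\[
\varphi(r,t,n,p^s) = n \cdot \expect\!\left[\lweight{E}\cdot\mathbf{1}\{\lweight{E}\leq r\}\right] = n \sum_{i=0}^{r} i \cdot \prob(\lweight{E}=i).
\]
Finally, for $\sigma(\ell,t,n,p^s)$, observe that $|\supp(\be_S)| = \sum_{i\in S}\mathbf{1}\{e_i \neq 0\}$; by linearity and the identical marginals, the expectation equals $\ell\cdot\prob(E\neq 0)$, which is $\ell\sum_{i=1}^M \prob(\lweight{E}=i)$. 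Note that since the marginal distribution is identical across coordinates, the choice of $S$ (whether fixed or uniformly random of size $\ell$) is irrelevant to the expectation.

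The proof involves no real obstacle: the only point requiring a sentence of justification is the permutation-invariance of the uniform law on the Lee sphere, which ensures that each coordinate has the same marginal distribution as the random variable $E$ from Lemma~\ref{lem:marginal}. Everything else is linearity of expectation applied to appropriate indicator or weight functions, followed by grouping probabilities by Lee weight via \eqref{equ:prob_LW}.
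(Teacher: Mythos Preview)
Your proof is correct and follows essentially the same route as the paper, which simply states that the result ``easily follows from \eqref{equ:prob_LW} and using the assumption that each entry of $\be$ is independent.'' If anything, your argument is cleaner: you only invoke permutation-invariance of the uniform law on the sphere to obtain identical marginals and then apply linearity of expectation, whereas the paper appeals to the (unnecessary, and strictly speaking only asymptotically valid) independence of the entries.
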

\begin{proof}
The proof easily follows from \eqref{equ:prob_LW} and using the assumption that each entry of $\be$ is independent.  
\end{proof}

\section{Restricted-Balls Algorithm}\label{sec:algo}

The idea of the new information set decoding algorithms is to use the information on the uniformly chosen  $\be \in S(0,t,n,p^s)$. 
We start with the algorithm for the first scenario, where we only decode up to the minimum distance given by the Gilbert-Varshamov bound and later adapt this algorithm to the second scenario, where we decode  beyond the minimum distance.

\subsection{Decoding up to the Minimum Lee Distance}

The high level idea lies in the following observation: for $t/n< M/2$, which we have due to the Gilbert-Varshamov bound, we know as $n$ grows large that $0$ is the most likely entry of $\be$, the second most likely is $\pm 1$ and so on, until the least likely entry is $\pm M$. Hence, if we define a threshold Lee weight $0\leq r \leq M$, then with a high probability (depending on the choice $r$) we have that all entries of $\be$ of Lee weight larger than $r$ can be found outside an information set. Thus, using the partial Gaussian elimination (PGE) algorithms, we are left with finding a smaller error vector, which only takes values in $\{0, \pm 1, \ldots, \pm r\}.$ This will make a huge difference for algorithms such as the Lee-metric BJMM \cite{leenp}, where the list sizes are the main factor in the cost and these can now be immensely reduced.

In general, this idea can be considered as a framework, where one can apply any algorithm that solves the smaller instance, but now in a smaller space.
The framework takes as input $(\bH, \bs, t, r, \mathcal{S})$, where $\mathcal{S}$ denotes a solver for the smaller instance in the space $\{0, \pm 1, \ldots, \pm r\}$, which instead of outputting a list of possible solutions for the smaller instance immediately checks whether the smaller solution at hand leads to a solution of the original instance. More precisely, the framework on $(\bH, \bs, t, r, \mathcal{S})$ works as follows:\\

Let us consider an instance of the LSDP, given by $\bH \in \left(\mathbb{Z}/p^s\mathbb{Z}\right)^{(n-k)\times n}$, $\bs \in \left(\mathbb{Z}/p^s\mathbb{Z}\right)^{n-k}$ and $ t \in \mathbb{N}$, with $t/n<M/2$.

\noindent \textit{Step 1:}\ For some $0 \leq \ell \leq n-k$, we will bring the parity-check matrix into partial systematic form by multiplying $\bH$ with some invertible $\bU \in \left(\mathbb{Z}/p^s\mathbb{Z}\right)^{(n-k)\times (n-k)}$ and adapting the syndrome accordingly to $\bs'=\bs\bU^\top$. For simplicity, assume that we have an information set in the last $k$ positions. Thus,  the LSDP becomes
\[\begin{pmatrix} \be_1 & \be_2 \end{pmatrix} \begin{pmatrix} \text{Id}_{n-k-\ell} & 0 \\ \bA^\top & \bB^\top \end{pmatrix}   = \begin{pmatrix} \bs_1 & \bs_2 \end{pmatrix}, \] where $\bA  \in \left(\mathbb{Z}/p^s\mathbb{Z}\right)^{(n-k-\ell) \times (k+\ell)}, \bB \in \left( \mathbb{Z}/p^s\mathbb{Z}\right)^{\ell \times (k+\ell)}, \bs_1 \in \left( \mathbb{Z}/p^s\mathbb{Z}\right)^{n-k-\ell}$ and $\bs_2 \in \left(\mathbb{Z}/p^s\mathbb{Z}\right)^\ell.$
Thus, we have to solve two parity-check equations:
\begin{align}
    \be_1+\be_2 \bA^\top &= \bs_1, \notag \\
    \be_2 \bB^\top &= \bs_2, \label{seceq}
\end{align}
where we assume that $\be_2$ has Lee weight $v$ and $\be_1$ has Lee weight $t-v,$ for some positive integer $0 \leq v \leq t.$

\noindent \textit{Step 2:} \ We solve the smaller instance of the LSDP given by Equation \eqref{seceq} using algorithm $\mathcal{S}$. In particular, we find an error vector $\be_2$ such that $ \be_2 \bB^\top = \bs_2$, $\lweight{\be_2} = v$, and it has entries in $\{0,\pm 1, \ldots, \pm r\}$. Instead of storing a list of solutions $\be_2$, $\mathcal{S}$ will immediately check whether $\be_1=\bs_1-\be_2\bA^\top$ has the remaining Lee weight $t-v.$ Clearly, $v$ will also depend on the choice of $r$.\\

Solving the smaller instance can be achieved using various techniques, for example via Wagner's approach used in \cite{leenp,thomas} or via the representation technique used in \cite{leenp}. However, we have to slightly adapt these techniques to make use of the assumption that the entries are restricted to $\{0,\pm 1, \ldots, \pm r\}$.

Let $S_{(r)}(0,v,n, p^s)$ denote the Lee sphere of weight $v$ centered at the origin with entries restricted to $\{0,\pm 1,\ldots,\pm r\}$, i.e., \[ S_{(r)}(0,v,n,p^s) := \{ \bx \in \{0,\pm 1, \ldots,\pm r\}^n | \lweight{\bx} = v \ \}. \] In the following lemma, we show that if $\be$ is a random vector of length $n$ and Lee weight $t$ which splits as $(\be_1,\be_2)$ with $\be_2 \in S_{(r)}(0,v,k+\ell,p^s)$, then $\be_2$ has a uniform distribution in $S_{(r)}(0,v,k+\ell,p^s)$.
\begin{lemma} \label{lemma:e2Distribution}
Let $\be$ be chosen uniformly at random  in $S(0,t,n,p^s)$ such that $\be = (\be_1, \be_2)$ with $\be_2 \in S_{(r)}(0,v,k+\ell,p^s)$. Then $\be_2$ follows a uniform distribution in $S_{(r)}(0,v,k+\ell,p^s)$, and henceforth $\be_1$ follows a uniform distribution in $S(0,t-v,n-k-\ell,p^s)$.
\end{lemma}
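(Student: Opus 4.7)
The plan is to use a direct counting argument exploiting the fact that the Lee weight decomposes additively over the split $\be = (\be_1,\be_2)$. Because $\be$ is uniform on $S(0,t,n,p^s)$ and $\lweight{\be} = \lweight{\be_1} + \lweight{\be_2}$, the vector $\be$ lies in $\{(\bx_1,\bx_2) : \lweight{\bx_1} + \lweight{\bx_2} = t\}$ with each such pair equally likely. This product-type description of the support is what drives the whole argument.

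First, I would fix an arbitrary $\bx \in S_{(r)}(0,v,k+\ell,p^s)$ and count
\[
\prob(\be_2 = \bx) \;=\; \frac{\card{\{\by \in (\zps)^{n-k-\ell} : \lweight{\by} = t - v\}}}{\card{S(0,t,n,p^s)}} \;=\; \frac{F(t-v,n-k-\ell,p^s)}{F(t,n,p^s)}.
\]
The numerator depends only on $v$, not on the particular $\bx$. Hence, conditioning on the event $A := \{\be_2 \in S_{(r)}(0,v,k+\ell,p^s)\}$, for every $\bx \in S_{(r)}(0,v,k+\ell,p^s)$ we get
\[
\prob(\be_2 = \bx \mid A) = \frac{\prob(\be_2 = \bx)}{\prob(A)} = \frac{1}{\card{S_{(r)}(0,v,k+\ell,p^s)}},
\]
which gives the first claim: $\be_2$ is uniform on $S_{(r)}(0,v,k+\ell,p^s)$.

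For the second claim, I would first observe that for any fixed $\bx_2 \in S_{(r)}(0,v,k+\ell,p^s)$ and any $\bx_1 \in S(0,t-v,n-k-\ell,p^s)$, the joint probability
\[
\prob(\be_1 = \bx_1,\, \be_2 = \bx_2) = \tfrac{1}{F(t,n,p^s)}
\]
is the same constant, so conditionally on $\be_2 = \bx_2$ the vector $\be_1$ is uniform on $S(0,t-v,n-k-\ell,p^s)$. Marginalizing this conditional distribution over $\bx_2 \in S_{(r)}(0,v,k+\ell,p^s)$ (using the uniformity already established) preserves uniformity on $S(0,t-v,n-k-\ell,p^s)$, giving the desired statement.

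There is no real obstacle here beyond being careful about what ``$\be = (\be_1,\be_2)$ with $\be_2 \in S_{(r)}(0,v,k+\ell,p^s)$'' means as a conditioning event. Everything reduces to the fact that the size of the ``slice'' of $S(0,t,n,p^s)$ above a given $\bx \in S_{(r)}(0,v,k+\ell,p^s)$ equals $F(t-v,n-k-\ell,p^s)$, which is independent of $\bx$; symmetry then forces uniformity of both marginals.
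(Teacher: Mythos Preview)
Your proposal is correct and follows essentially the same approach as the paper: both arguments rest on the observation that for each fixed $\bx \in S_{(r)}(0,v,k+\ell,p^s)$ the fiber $\{\be \in S(0,t,n,p^s) : \be_2 = \bx\}$ has size $F(t-v,n-k-\ell,p^s)$, independent of $\bx$, which forces uniformity after conditioning. Your write-up is simply more explicit than the paper's two-sentence version, in particular by separately treating the conditional distribution of $\be_1$.
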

\begin{proof}
We note that for an arbitrary $\be_2 \in S_{(r)}(0,v,k+\ell,p^s)$, there are exactly $|S(0,t-v,n-k-\ell,p^s)|$ possible $\be$ that restrict to $\be_2$ in their last $k+\ell$ coordinates.  
Therefore, if $\be$ is chosen uniformly at random, then each $\be_2$ has an equal chance of being chosen in $S_{(r)}(0,v,k+\ell,p^s)$.  
\end{proof}
As a corollary, we see that this splitting of $\be$ comes with a probability of 
$$P= F_{(r)}(v,k+\ell, p^s)F(t-v, n-k-\ell,p^s) F(t,n,p^s)^{-1}.$$
\noindent \textit{Using the BJMM-Approach:} \  Let us consider an adaption of the Lee-BJMM algorithm from \cite{leenp}, where two levels were the optimal choice and proved to remain the optimal choice also for this new algorithm.
Although the smaller error vector $\be_2$ now only has entries in $\{0, \pm 1, \ldots, \pm r\},$   to enable representation technique, we will assume that such a vector $\be_2$ is built from the sum of two vectors $\by_1+\by_2$, where $\varepsilon$ many of their positions cancel out and thus are allowed to live in the whole ring $\mathbb{Z}/p^s\mathbb{Z}.$ Let us denote these positions by $\mathcal{E}.$\\

The high level idea of BJMM on two levels is as follows: we  split $\be_2$ as
\begin{align*}
    \be_2 &= \by_1 +\by_2 \\
        &= (\bx_1^{(1)}, \bx_2^{(1)}) + (\bx_1^{(2)}, \bx_2^{(2)}).
\end{align*}
Thus, for the syndrome equation to be satisfied, we want that 
$$\bs_2 = \be_2 \bB^\top = \by_1\bB^\top +  \by_2\bB^\top.$$
Let us also split $\bB \in \left(\mathbb{Z}/p^s\mathbb{Z}\right)^{\ell \times (k+\ell)}$ into two matrices 
$\bB= \begin{pmatrix} \bB_1 & \bB_2 \end{pmatrix},$ where $\bB_i \in \left(\mathbb{Z}/p^s\mathbb{Z}\right)^{\ell \times (k+\ell)/2},$ for $i \in \{1,2\}.$ Then in a first merge to get $\by_i = (\bx_1^{(i)}, \bx_2^{(i)})$ we want for $i=1$, that they give the syndrome 0, i.e., 
\begin{align*}
    \bx_1^{(1)} \bB_1^\top= - \bx_2^{(1)} \bB_2^\top,
\end{align*}
and for $i=2$ that they give the syndrome $\bs_2$, i.e., 
\begin{align*}
    \bx_1^{(2)} \bB_1^\top=\bs_2 - \bx_2^{(2)} \bB_2^\top.
\end{align*}
Let us split $\mathcal{E}$ evenly into two disjoint index sets, i.e., $\mathcal{E} = \mathcal{E}_1 \cup \mathcal{E}_2$ such that $\mid \mathcal{E}_1 \mid = \mid \mathcal{E}_2\mid$ and $\mathcal{E}_1 \cap \mathcal{E}_2 = \emptyset$.
The base lists $\mathcal{B}_i$ for $i \in \{1, 2\}$ are then be built as follows
\begin{align*}
    \mathcal{B}_i = \{\nu(\bx) \mid \bx_{\mathcal{E}_i^C}  \in  \{0, \ldots, \pm r\}^{(k+\ell-\varepsilon)/2}, \bx_{\mathcal{E}_i}  \in   \left(\mathbb{Z}/p^s\mathbb{Z}\right)^{\varepsilon/2}  , \text{wt}_L(\bx_{\mathcal{E}_i^C})=v/4, \nu  \in  S_{(k+\ell)/2}\}.
\end{align*}

For some positive integer $u\leq n$ and $\bx,\by \in \left(\mathbb{Z}/p^s\mathbb{Z}\right)^n$, we write $\bx=_u \mathbf{y}$, to denote that $\bx=\mathbf{y}$ in the last $u$ positions. Let us define the following two sets.
\begin{align*} 
    \mathcal{L}_1 & = \{\mu(\by) \mid \by_{\mathcal{E}^C}  \in  \{0,  \ldots, \pm r\}^{k+\ell-\varepsilon}, \by_{\mathcal{E}}  \in  \left(\mathbb{Z}/p^s\mathbb{Z}\right)^\varepsilon , \text{wt}_L(\by_{\mathcal{E}^C}) = v/2, \by \bB^\top  =_u  0, \mu  \in  S_{k+\ell}. \}, \\
    \mathcal{L}_2 & = \{\mu'(\by) \mid \by_{\mathcal{E}^C}  \in  \{0,  \ldots, \pm r\}^{k+\ell-\varepsilon}, \by_{\mathcal{E}}  \in  \left(\mathbb{Z}/p^s\mathbb{Z}\right)^\varepsilon , \text{wt}_L(\by_{\mathcal{E}^C}) = v/2, \by \bB^\top  =_u   \bs_2,  \mu'  \in  S_{k+\ell} \}.
\end{align*} 
 
Performing a concatenation merge, we compute $\by_i=(\bx_1^{(i)}, \bx_2^{(i)})$ for $(\bx_1^{(i)}, \bx_2^{(i)}) \in \mathcal{B}_1 \times \mathcal{B}_2$ on the syndromes 0 and $\bs_2$ and $u$ positions.
Hence, to get $\by_1 \in \mathcal{L}_1$, we merge $\by_1 = (\bx_1^{(1)}, \bx_2^{(1)})$, such that $$ \bx_1^{(1)} \bB_1^\top =_u - \bx_2^{(1)}\bB_2^\top,$$ and to get $\by_2 \in \mathcal{L}_2$, we merge $\by_2 = (\bx_1^{(2)}, \bx_2^{(2)})$, such that $$\bx_1^{(2)}\bB_1^\top=_u \bs_2 - \bx_2^{(2)} \bB_2^\top.$$

We then merge $\mathcal{L}_1 \bowtie \mathcal{L}_2$ on the syndrome $\bs_2$ and $\ell$ positions, computing $\be_2=\by_1+\by_2$,  for $(\by_1, \by_2) \in \mathcal{L}_1 \times \mathcal{L}_2$
such that the positions $\mathcal{E}$ of $\by_1$ and $\by_2$ cancel out, i.e.,
$ {\by_1}_{\mathcal{E}}+{\by_2}_{\mathcal{E}}=0$
and $\text{wt}_L(\be_2)=v.$
\begin{remark}
Note that our base lists, as well as the lists $\mathcal{L}_i$ employ a permutation. Hence, it might happen that  the $\mathcal{E}$ positions are not equal for $\by_1$ and $\by_2$, and these positions might not cancel out. However, the algorithm will still succeed, 
since we will check within the merge, that $\by_i \in \mathcal{L}_i$ have the correct weight $v$. 
The only implication for the workfactor is that  the success probability in this case would even be larger, thus we are giving an upper bound on the cost.
\end{remark}
\begin{figure}
\begin{center}
    \begin{tikzpicture}[line cap=round,line join=round,>=triangle 45,x=1cm,y=1cm]
        \draw (-1.1, 3.75) node[left] {$\mathbf{e}_2 $};
        \draw[line width=1pt] (-1,4) -- (-1, 3.5) -- (11,3.5) -- (11,4) -- cycle;
        \draw[stealth-stealth] (-1, 4.1) -- (11, 4.1);
        \draw (5, 4.1) node[above] {$k+ \ell$};
        \fill[line width=1pt,fill = gray, fill opacity=0.3] (2,3.5) -- (8,3.5) -- (8,4) -- (2,4) -- cycle;
        \draw (5, 3.6) node[below] {$\underbrace{\hspace{6cm}}$};
        \draw (5, 3.3) node[below] {$\supp (\mathbf{e}_2) \in \set{\pm 1, \dots, \pm r}^{\card{\supp (\mathbf{e}_2)}}$};
        \draw (5, 3.75) node[] {$v$};
    
        \draw (-1.1,1.75) node[left] {$\mathbf{y}_1$};
        \draw[line width=1pt] (-1,1.5) -- (-1,2) -- (11,2) -- (11,1.5) -- cycle;
        \draw[line width = 0.75pt, style = dashed] (5, 2.15) -- (5, 1.25);
        
        \fill[line width=1pt,fill = gray, fill opacity=0.3] (2,2) -- (3.5,2) -- (3.5,1.5) -- (2,1.5) -- cycle;
        \draw (2.75,1.75) node[] {\footnotesize $v/4$};
        \draw[pattern = north east lines, pattern color = lightgray] (0, 2) rectangle (0.8, 1.5);
        \draw[stealth-stealth] (0, 2.1) -- (0.8, 2.1);
        \draw (0.4, 2.1) node[above] {\footnotesize $\varepsilon/2$};
        \draw (2, 1.6) node[below] {$\footnotesize\underbrace{\hspace{6cm}}$};
        \draw (2, 1.3) node[below] {\footnotesize $\mathbf{x}_1^{(1)}$};
        
        \fill[line width=1pt,fill = gray, fill opacity=0.3] (5,2) -- (6.5,2) -- (6.5,1.5) -- (5,1.5) -- cycle;
        \draw (5.75, 1.75) node[] {\footnotesize $v/4$};
        \draw[pattern = north east lines, pattern color = lightgray] (8.5, 2) rectangle (9.3, 1.5);
        \draw[stealth-stealth] (8.5, 2.1) -- (9.3, 2.1);
        \draw (8.9, 2.1) node[above] {\footnotesize $\varepsilon/2$};
        \draw (8, 1.6) node[below] {$\underbrace{\hspace{6cm}}$};
        \draw (8, 1.3) node[below] {\footnotesize $\mathbf{x}_2^{(1)}$};
        
        \draw (-1.1,0.25) node[left] {$\mathbf{y}_2 $}; 
        \draw[line width=0.75pt] (-1,0.5) -- (-1,0) -- (11,0) -- (11,0.5) -- cycle;
        \draw[line width = 0.75pt, style = dashed] (5, 0.65) -- (5, -0.25);
    
        \fill[line width=1pt,fill = gray, fill opacity=0.3] (5,0) -- (3.5,0) -- (3.5,0.5) -- (5,0.5) -- cycle;
        \draw (4.25, 0.25) node[] {\footnotesize $v/4$};
        \draw[pattern = north east lines, pattern color = lightgray] (0, 0) rectangle (0.8, 0.5);
        \draw[stealth-stealth] (0, 0.6) -- (0.8, 0.6);
        \draw (0.4, 0.6) node[above] {\footnotesize $\varepsilon/2$};
        \draw (2, 0.1) node[below] {$\underbrace{\hspace{6cm}}$};
        \draw (2, -0.2) node[below] {\footnotesize $\mathbf{x}_1^{(2)}$};
        
        \fill[line width=1pt,fill = gray, fill opacity=0.3] (8,0) -- (6.5,0) -- (6.5,0.5) -- (8,0.5) -- cycle;
        \draw (7.25,0.25) node[] {\footnotesize $v/4$};
        \draw[pattern = north east lines, pattern color = lightgray] (8.5, 0) rectangle (9.3, 0.5);
        \draw[stealth-stealth] (8.5, 0.6) -- (9.3, 0.6);
        \draw (8.9, 0.6) node[above] {\footnotesize $\varepsilon/2$};
        \draw (8, 0.1) node[below] {$\underbrace{\hspace{6cm}}$};
        \draw (8, -0.2) node[below] {\footnotesize $\mathbf{x}_2^{(2)}$};
    \end{tikzpicture}
\end{center}
\caption{Illustration of two levels decomposition of the vector $\be_2$ into $\mathbf{y}_1$ and $\mathbf{y}_2$, where $\mathbf{y}_i = (\mathbf{x}_1^{(i)} ,\mathbf{x}_2^{(i)} )$ for $i = 1, 2$. The gray areas denote the support of the vectors and the values inside the area are the corresponding Lee weights.}
\end{figure}
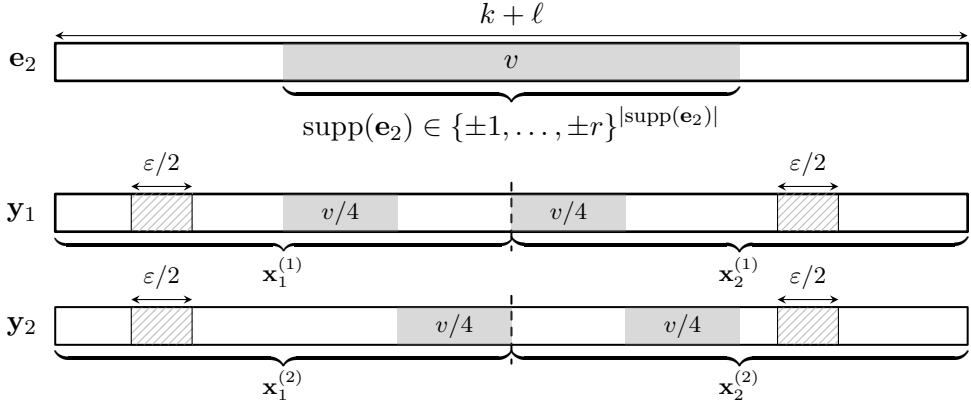

We now present the merging algorithms  and their asymptotic costs. 
For this, we fix the real numbers $V,L,E,U$ with $$0 \leq V \leq  \min\{T, \varphi(r,t,n,p^s)\}, \ \  0 \leq L \leq  1-R, \ \ 0 < E< R+L,$$ such that 
$0 \leq T-2V \leq M(1-R-L)$ and $0 <U < L.$
Then we fix the internal algorithm parameters and $v, \ell, \varepsilon,u $ which we see as functions depending on $n$, such that
$$ \lim\limits_{n \to \infty} \frac{v}{n}=V, \ \lim\limits_{n \to \infty} \frac{\ell}{n}=L, \ \lim\limits_{n \to \infty} \frac{\varepsilon}{n}=E \ \text{ and } \ \lim\limits_{n \to \infty} \frac{u}{n}=U .$$

\begin{algorithm}
\caption{Merge-concatenate}\label{algo:merge-concat}
\begin{flushleft}
Input: The input lists $\mathcal{B}_1, \mathcal{B}_2$, the positive integers $0 \leq u \leq \ell$,  $\bB_1, \bB_2 \in \zpsk{\ell \times (k+\ell)/2}$ and  $\bt \in \zpsk{\ell}$. \\ 
Output: $\mathcal{L} = \mathcal{B}_1 \concat_{\bt} \mathcal{B}_2$.
\end{flushleft}
\begin{algorithmic}[1]
\State Lexicographically sort $\mathcal{B}_1$  according to the last $u$ positions of $ \bx_1\bB_1^\top$ for $\bx_1 \in \mathcal{B}_1$. We also store the last $u$ positions of $ \bx_1\bB_1^\top $ in the sorted list.
\For{$\bx_2 \in \mathcal{B}_2$}
\For{$\bx_1 \in \mathcal{B}_1$ with $ \bx_1 \bB_1^\top =_u \bt - \bx_2\bB_2^\top$}
    \State $\mathcal L = \mathcal L \cup \{(\bx_1,\bx_2)\}$.
\EndFor
\EndFor
\State Return $\mathcal{L}.$
\end{algorithmic}
\end{algorithm}

\begin{lemma}[{\cite[Lemma 4.3]{leenp}}] \label{lemma:cost-mc}
The asymptotic of the average cost of Algorithm \ref{algo:merge-concat} is
\begin{align*} 
\lim\limits_{n \to \infty} \frac{1}{n} \max \left\{\log_{p^s}  \left(\mid \mathcal{B}_1 \mid\right), \log_{p^s}\left( \mid \mathcal{B}_2\mid \right),  \log_{p^s}  \left(\mid \mathcal{B}_1 \mid\right) +  \log_{p^s}  \left(\mid \mathcal{B}_2 \mid\right) -U\right\}.
\end{align*}
\end{lemma}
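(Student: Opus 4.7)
The plan is to break the running cost of Algorithm \ref{algo:merge-concat} into three contributions---sorting, the outer-loop traversal, and the production of the output collisions---and then argue that, after normalizing by $1/n$ and taking $\log_{p^s}$, the sum is dominated by its largest term, yielding the stated maximum.

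First I would bound the sorting step (line 1): producing a lexicographic sort of $\mathcal{B}_1$, together with the stored $u$-position syndromes $\bx_1\bB_1^\top$, costs $O(|\mathcal{B}_1|\log|\mathcal{B}_1|)$ ring operations. After multiplying by $1/n$ and taking $\log_{p^s}$, the logarithmic factor is absorbed into an $o(1)$, leaving an asymptotic contribution of $\frac{1}{n}\log_{p^s}(|\mathcal{B}_1|)$. The outer loop (line 2) then iterates through every element of $\mathcal{B}_2$ and contributes at least $|\mathcal{B}_2|$ to the overall cost; for each $\bx_2$, locating the relevant block in the sorted list via binary search or a hash lookup is only logarithmic in $|\mathcal{B}_1|$ and is again absorbed asymptotically.

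The decisive piece is the expected number of pairs that survive the filter $\bx_1\bB_1^\top =_u \bt-\bx_2\bB_2^\top$, which controls the work done inside the inner loop as well as the size of $\mathcal{L}$. Since the input parity-check matrix $\bH$ is drawn uniformly at random and the PGE transformation does not bias the last $k+\ell$ columns, the entries of $\bB$, hence of the submatrix $\bB_1$, are again uniformly distributed in $\mathbb{Z}/p^s\mathbb{Z}$. Therefore, for every fixed nonzero $\bx_1$, the last $u$ coordinates of $\bx_1\bB_1^\top$ are uniform in $(\mathbb{Z}/p^s\mathbb{Z})^u$, and the probability of matching a prescribed target on those $u$ positions equals $(p^s)^{-u}$. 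Summing over all $(\bx_1,\bx_2)\in \mathcal{B}_1\times\mathcal{B}_2$ yields an expected output size of $|\mathcal{B}_1|\cdot|\mathcal{B}_2|\cdot(p^s)^{-u}$.

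Combining the three contributions, the average cost is bounded by
\[
|\mathcal{B}_1|\log|\mathcal{B}_1| \;+\; |\mathcal{B}_2| \;+\; |\mathcal{B}_1|\cdot|\mathcal{B}_2|\cdot(p^s)^{-u}.
\]
Using the elementary bound $\log_{p^s}(a+b+c)\leq \log_{p^s}(3)+\max\{\log_{p^s}a,\log_{p^s}b,\log_{p^s}c\}$, dividing by $n$, letting $n\to\infty$, and using $u/n \to U$ converts the third term into $\frac{1}{n}\log_{p^s}(|\mathcal{B}_1|)+\frac{1}{n}\log_{p^s}(|\mathcal{B}_2|)-U$, while the additive $\log_{p^s}(3)/n$ and the $\log$ factors disappear. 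This matches exactly the expression in the statement. The main obstacle is the third step: one really needs the uniform-randomness property of $\bB_1$ to assert $\prob(\bx_1\bB_1^\top=_u \bc)=(p^s)^{-u}$; this is precisely the hypothesis on $\bH$ invoked earlier in the paper, so the argument reduces to applying it, as already done in \cite[Lemma~4.3]{leenp}.
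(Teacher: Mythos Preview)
Your argument is correct and is precisely the standard sort--search--collide decomposition used for such merge routines. Note, however, that the paper does not actually prove this lemma: it is quoted verbatim from \cite[Lemma~4.3]{leenp}, so there is no ``paper's own proof'' to compare against; your write-up is essentially a reconstruction of the argument given in that reference.
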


From this we get the lists
 \begin{align*}
    \mathcal{L}_1 = \mathcal{B}_1 \concat_0 \mathcal{B}_2, \
    \mathcal{L}_2 = \mathcal{B}_1 \concat_{\bs_2} \mathcal{B}_2.
\end{align*}

The second merge should not only merge to the target vector $\bs_2$, it should also check the Lee weight of the merged vector $\by_1+\by_2$ and also the Lee weight of the remaining error vector $\be_1 = \bs_1 -(\by_1+\by_2)\bA^\top$. 

\begin{algorithm}
\caption{Last Merge}\label{algo:last-merge}
\begin{flushleft}
Input: The input lists $\mathcal{L}_1, \mathcal{L}_2$, the positive integers $0\leq v \leq t,0 \leq u \leq \ell$,  $\bB \in \zpsk{\ell \times (k+\ell)}, \bs_2 \in \zpsk{\ell}$ and $\bs_1 \in \zpsk{n-k-\ell}, \bA \in \zpsk{(n-k-\ell) \times (k+\ell)}$. \\ 
Output: $\be \in \mathcal{L}_1 \bowtie \mathcal{L}_2$.
\end{flushleft}
\begin{algorithmic}[1]
\State Lexicographically sort $\mathcal L_1$ according to $ \by_1\bB^\top$ for $\by_1 \in \mathcal L_1$. We also store  $\by_1\bB^\top$ in the sorted list. 
\For{$\by_2 \in \mathcal{L}_2$}
\For{$\by_1 \in \mathcal{L}_1$ with $ \by_1\bB^\top = \bs_2 - \by_2\bB^\top$ }
    \If{$\lweight{\by_1+\by_2} = v$ and $\lweight{\bs_1-(\by_1+\by_2)\bA^\top}=t-v$}
    \State Return $(\bs_1-(\by_1+\by_2)\bA^\top, \by_1+ \by_2)$.
    \EndIf
\EndFor
\EndFor
\end{algorithmic}
\end{algorithm}

\begin{corollary}[{\cite[Corollary 2]{leenp}}]  \label{last-merge-cost}
The asymptotic average cost of the last  merge (Algorithm \ref{algo:last-merge}) is given by
\[\lim\limits_{n \to \infty} \frac{1}{n} \max \left \{ \log_{p^s}(\mid \mathcal{L}_1 \mid), \log_{p^s}(\mid \mathcal{L}_2 \mid ),    \left( \log_{p^s}(\mid \mathcal{L}_1 \mid) + \log_{p^s}(\mid \mathcal{L}_2 \mid )\right)-(L-U) \right\}.\] 
\end{corollary}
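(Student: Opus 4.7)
The plan is to mirror the proof of Lemma \ref{lemma:cost-mc}, exploiting the fact that Algorithm \ref{algo:last-merge} has the same overall shape as Algorithm \ref{algo:merge-concat}. The one difference that needs accounting for is that the elements of $\mathcal{L}_1, \mathcal{L}_2$ were \emph{already} filtered to agree on $u$ of the $\ell$ syndrome coordinates, so the residual match performed in step~2 is only over the remaining $\ell - u$ positions. This is the only place the $U$ of the merge-concatenate cost gets replaced by $L-U$.

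Concretely, I would split the running time of Algorithm \ref{algo:last-merge} into three contributions. First, the lexicographic sort in step~1 can be done in $O(|\mathcal{L}_1|\log|\mathcal{L}_1|)$ operations, contributing $\frac{1}{n}\log_{p^s}|\mathcal{L}_1|+o(1)$ on the asymptotic scale. Second, the outer loop over $\mathcal{L}_2$ with a single lookup per iteration contributes at least $\frac{1}{n}\log_{p^s}|\mathcal{L}_2|$. Third, the inner body is executed once per matching pair $(\by_1,\by_2)$; since $\by_1\bB^\top$ and $\bs_2-\by_2\bB^\top$ already agree on $u$ coordinates by construction of the $\mathcal{L}_i$'s, each of the remaining $\ell - u$ coordinates imposes a uniform $p^{-s}$ constraint coming from the corresponding fresh row of the random $\bB$. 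By linearity of expectation the expected number of matching pairs is $|\mathcal{L}_1|\cdot|\mathcal{L}_2|/p^{s(\ell-u)}$, and each match costs only a polynomial-in-$n$ number of operations for the two Lee-weight checks and the matrix-vector product with $\bA$. On the $\frac{1}{n}\log_{p^s}$ scale this contribution is $\frac{1}{n}\log_{p^s}|\mathcal{L}_1|+\frac{1}{n}\log_{p^s}|\mathcal{L}_2|-(L-U)+o(1)$.

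Summing these three contributions and taking the max on the logarithmic scale produces exactly the claimed formula. The delicate point, shared with the proof of Lemma \ref{lemma:cost-mc}, is justifying that the $\ell - u$ ``unused'' rows of $\bB$ may be treated as independent uniform samples conditional on list membership. Since list membership only depends on $u$ of the $\ell$ syndrome rows, one can peel off the remaining $\ell - u$ rows as fresh randomness, and the expected count then follows cleanly by linearity of expectation. With that heuristic in place, the argument is essentially identical to the merge-concatenate case, up to the single replacement of $U$ by $L - U$ in the last slot of the maximum.
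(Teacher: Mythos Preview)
Your argument is correct and matches the paper's own reasoning: the corollary is cited from \cite{leenp} without a proof, and the paper's only commentary is the remark immediately following it that ``the $L-U$ comes from the fact that the vectors already merge to $\bs_2$ on $U$ positions due to the first merge,'' which is exactly the key observation you isolate. Your three-term decomposition (sort, outer loop, expected number of collisions over the remaining $\ell-u$ fresh coordinates) is the standard proof of such merge costs and is precisely how Lemma~\ref{lemma:cost-mc} is established in \cite{leenp}, so there is nothing to add.
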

Note that the $L-U$ comes from the fact that the vectors already merge to $\bs_2$ on $U$ positions due to the first merge.
Also, it might happen that $\by_1 +\by_2$ results in a vector of Lee weight $v$, but the $\mathcal{E}$ positions did not cancel out, or the positions of low Lee weight are going above the threshold $r$. This will not be a problem for us, as this only results in a larger final list, which does not need to be stored and the success probability of the algorithm would then even be larger as $P.$

The way we choose $u$, is such that we ensure that there exists at least one representative $\by_1 \in \mathcal{L}_1$ of the solution $\be_2$, i.e., such that there exists $\by_2 \in \mathcal{L}_2$ with $\by_1 + \by_2 =\be_2.$
Thus, we have to compute the expected total number of such representatives for a fixed $\be_2.$ From Lemma \ref{lemma:e2Distribution}, we know that $\be_2$ follows a uniform distribution in $S_{(r)}(0,v,k+l,p^s)$.

Using the marginal distribution in \eqref{equ:marginal} and \eqref{equ:prob_LW}, we can compute the expected Lee weight distribution for $\be_2$. Let $\lambda$ be the expected Lee weight composition of $\be_2$, and $\sigma$ be the expected support size of $\be_2$. 
Also recall that for a weak composition $\lambda$ of $v$, we denote by $C(v/2, v, \lambda, k+\ell ,p^s)$ the number of weak ompositions $\pi$ of $v/2$ which fit into a composition $\lambda$ of length $k+\ell$, i.e., the maximal part sizes are given by $\lambda.$
\begin{lemma}
    The expected number of representatives $(\by_1,\by_2) \in \mathcal{L}_1 \times \mathcal{L}_2$ for a fixed solution $\be_2$ is at least given by 
       \[C(v/2, v, \lambda, k+\ell,p^s) \binom{k+\ell - \sigma}{\varepsilon}(p^s-1)^\varepsilon,\] where $\lambda$ is the expected Lee weight composition of $\be_2$, and $\sigma$ is the expected support size of $\be_2$.
\end{lemma}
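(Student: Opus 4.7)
The plan is to produce the bound by exhibiting an explicit construction of pairs $(\by_1, \by_2) \in \mathcal{L}_1 \times \mathcal{L}_2$ satisfying $\by_1 + \by_2 = \be_2$, and then to count how many such pairs it yields. By Lemma~\ref{lemma:e2Distribution} I may assume that $\be_2$ is uniform on $S_{(r)}(0, v, k+\ell, p^s)$, so I work with a typical representative whose Lee-weight composition is $\lambda$ and whose support has size $\sigma$. The counting factors into three independent ingredients, corresponding to (i) choosing the cancellation set $\mathcal{E}$, (ii) filling in $\by_1, \by_2$ on $\mathcal{E}$, and (iii) filling them in on $\mathcal{E}^C$; I would multiply the contributions.

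For (i), because $\by_1 + \by_2 = \be_2$ globally and vanishes on $\mathcal{E}$, every coordinate of $\mathcal{E}$ must be a zero coordinate of $\be_2$, of which there are $k+\ell - \sigma$. This contributes the factor $\binom{k+\ell-\sigma}{\varepsilon}$. For (ii), on each $i \in \mathcal{E}$ I freely pick a nonzero value $(\by_1)_i \in \mathbb{Z}/p^s\mathbb{Z}\setminus\{0\}$ and force $(\by_2)_i = -(\by_1)_i$, which is allowed since the base lists place no restriction on the entries in $\mathcal{E}$; this contributes $(p^s - 1)^\varepsilon$.

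For (iii), I would split $\be_2$ on $\mathcal{E}^C$ with no cancellation, so that Lee weights add coordinate-wise. At each support coordinate $i$ of $\be_2$, I pick $(\by_1)_i$ with the same sign as $(\be_2)_i$ and magnitude $\pi_i \in \{0, \ldots, \lambda_i\}$; then $(\by_2)_i = (\be_2)_i - (\by_1)_i$ lies in $\{0, \pm 1, \ldots, \pm r\}$ with $\lweight{(\by_2)_i} = \lambda_i - \pi_i$, so both summands stay in the restricted alphabet. At every zero coordinate of $\be_2$ outside $\mathcal{E}$ I am forced to set $(\by_1)_i = (\by_2)_i = 0$. The weight constraint $\lweight{(\by_1)_{\mathcal{E}^C}} = v/2$ becomes $\sum_i \pi_i = v/2$, so the admissible tuples $\pi$ are exactly the weak compositions of $v/2$ that fit into $\lambda$, which is $C(v/2, v, \lambda, k+\ell, p^s)$.

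Taking the product of the three factors gives the stated lower bound. The inequality (rather than equality) arises from representatives that my construction deliberately ignores: pairs with genuine cancellation on the support of $\be_2$ and pairs where $\mathcal{E}$ overlaps the support while being absorbed into larger-magnitude summands on either side. Enumerating these would only inflate the count, so omitting them preserves the direction of the bound. I do not expect any genuine obstacle beyond bookkeeping; the delicate point is the sign discipline in step (iii), which is what makes the integer decomposition $(\by_1)_i + (\by_2)_i = (\be_2)_i$ in $\mathbb{Z}/p^s\mathbb{Z}$ translate faithfully into the additive Lee-weight condition $\lweight{(\by_1)_i} + \lweight{(\by_2)_i} = \lambda_i$ and makes $C(v/2, v, \lambda, k+\ell, p^s)$ the exact count of valid magnitude profiles.
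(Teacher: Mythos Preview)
Your proposal is correct and follows essentially the same approach as the paper's proof: both construct representatives by placing the cancellation set $\mathcal{E}$ outside the support of $\be_2$, filling $\mathcal{E}$ with arbitrary cancelling nonzero pairs, and on $\mathcal{E}^C$ using the same-sign discipline so that the admissible magnitude profiles for $\by_1$ are exactly the weak compositions of $v/2$ fitting into $\lambda$. The paper additionally spells out why a magnitude exceeding $\lambda_i$ at some coordinate forces $\lweight{\by_2}\neq v/2$, which is the content you allude to as the ``sign discipline'' point; otherwise the arguments coincide.
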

\begin{proof}
Consider the Lee weight composition of $\be_2$ to be $\lambda=(\lambda_1, \ldots, \lambda_{k+\ell}),$ which is such that $\lambda_i=\text{wt}_L((\be_2)_i)$.
Thus, $\be_2= ( s_1 \lambda_1, \ldots, s_{k+\ell} \lambda_{k+\ell}),$ for $s_i \in \{1,-1\}.$
Then, to get all possible representatives $\by_1$, we need the number of weak compositions $\pi$ of  $v/2$ fitting into $\lambda.$ In fact, for any $\pi = (\pi_1, \ldots, \pi_{k+\ell})$ fitting into $\lambda$, there will exist exactly one eligible $\by_1$ with $\text{wt}_L((\by_1)_i)=\pi_i$ and $(\by_1)_i = s_i \pi_i.$
Note that the Lee weight composition of $\by_2 \in \mathcal{L}_2$ is then $$\mid \lambda-\pi \mid = (\mid \lambda_1 - \pi_1 \mid, \ldots, \mid \lambda_{k+\ell} - \pi_{k+\ell} \mid).$$
   
On the other hand, for any representative $\by_1$, we cannot have $\pi_i=\text{wt}_L((\by_1)_i) > \text{wt}_L((\be_2)_i)$ and $(\by_1)_i = -s_i \pi_i$ for any $i \in \{ 1,\ldots,\sigma\}$.
In fact, let us assume we have $A$ many positions in $\by_1$ which are such that $\pi_i=\text{wt}_L((\by_1)_i) > \text{wt}_L((\be_2)_i)= \lambda_i$. Then due to the entry-wise additivity of the Lee weight, we have that  $\by_2$, with composition $\mid \lambda-\pi \mid$, has $\text{wt}_L(\by_2) > v/2:$ in the  considered $A$ positions we have that $\text{wt}_L((\by_2)_j)= \pi_j-\lambda_j$ and the Lee weight of the remaining $\sigma-A$ positions is given by $\text{wt}_L((\by_2)_j) =  \lambda_j-\pi_j$, which if we sum over all positions gives 
\begin{align*} 
   \text{wt}_L(\by_2) & =\sum_{j=1}^A ( \pi_j-\lambda_j ) + \sum_{j=A+1}^{k+\ell} (\lambda_j-\pi_j) \\
   &= \sum_{j=1}^A ( \pi_j-\lambda_j )+ v- \sum_{j=1}^A \lambda_j  - \left(v/2 -  \sum_{j=1}^A - \pi_j \right) \\
   & = v/2 + 2 \left( \sum_{j=1}^A  \pi_j-\lambda_j \right)  \neq v/2. \end{align*}
It is easy to see, that for each fixed $\pi$, there exists only one representative $\by_1$, which has in each position the same sign as $\be_2.$

Recall that $C(v/2, v, \lambda, k+\ell,p^s)$  denotes  the number of weak compositions $\pi$ of $v/2$ which fit into $\lambda$.  
Now, since $\by_1$ can take any non-zero value on the $\varepsilon$ positions outside of the support of $\be_2$, we get the claim. Finally, the exact number of representations might even be larger than this, since a solution $\be_2$ might also be formed from positions $\mathcal{E}$ which will not cancel out, as assumed for this computation.  
\end{proof}
 
In order to ensure the existence of at least one representative $\by_1 \in \mathcal{L}_1$ of $\be_2$, we now choose 
\[ u=\left\lfloor\log_{p^s}\left(C(v/2, v, \lambda, k+\ell,p^s) \binom{k+\ell-\sigma}{\varepsilon} (p^s-1)^\varepsilon\right)\right\rfloor.\]
Thus, in the asymptotic cost we need to compute 
$U=\lim\limits_{n \to \infty} u/n.$ 

Let us denote the asymptotics of the binomial coefficient by 
\begin{align*} H(F, G) & := \lim_{n\to \infty} \frac{1}{n} \log_{p^s} \left(  \binom{f(n)}{g(n)} \right) \\ 
& = F \log_{p^s}(F) - G \log_{p^s}(G) - (F-G) \log_{p^s} (F-G), \end{align*} 
where $f(n), g(n)$ are integer-valued functions such that $\lim\limits_{n\to \infty} \frac{f(n)}{n} = F$ and $\lim\limits_{n \to \infty} \frac{g(n)}{n} = G$. By Lemma \ref{asympt_smallballs}, we have computed \[ \gamma(v/2) = \lim\limits_{n' \to \infty} \frac{1}{n'}\log_{p^s}\left(C(v/2, v, \lambda, n',p^s)\right). \] For us $n'=k+\ell$, which also tends to infinity for $n$ going to infinity. Thus, \[\lim\limits_{n \to \infty} \frac{k+\ell}{n} \lim\limits_{k +\ell \to \infty} \frac{1}{k+\ell} \log_{p^s}\left(C(v/2, v, \lambda, k+\ell,p^s)\right) = (R+L) \gamma(v/2). \]
 Then, 
\begin{align*} U = (R+L) \gamma(v/2) + H(R+L-S, E) +E,\end{align*}
where $S=\lim\limits_{n \to \infty} \sigma/n.$

\begin{algorithm}[h!]
\caption{Lee-BJMM with Small Balls}\label{algo:bjmm}
\begin{flushleft}
Input: $\bH \in \left(\mathbb{Z}/p^s\mathbb{Z}\right)^{(n-k)\times n}, \bs \in \left(\mathbb{Z}/p^s\mathbb{Z}\right)^{n-k}, t \in \mathbb{N}$, given the positive integers $0 \leq \ell \leq n-k, 0 \leq v \leq t, 0 \leq \varepsilon \leq k+\ell$. \\ 
Output: $\be \in \left(\mathbb{Z}/p^s\mathbb{Z}\right)^n$ with $\bs= \be\bH^\top$ and $\text{wt}_L(\be)=t$.
\end{flushleft}
\begin{algorithmic}[1]
\State Choose a $n \times n$ permutation matrix $\bP$ and find an invertible matrix $\bU \in \left(\mathbb{Z}/p^s\mathbb{Z}\right)^{(n-k) \times (n-k)}$ such that 
\[\bU\bH\bP= \begin{pmatrix} \text{Id}_{n-k-\ell} & \bA \\ 0 & \bB \end{pmatrix}, \] where $\bA  \in \left(\mathbb{Z}/p^s\mathbb{Z}\right)^{(n-k-\ell) \times (k+\ell)}, \bB \in \left( \mathbb{Z}/p^s\mathbb{Z}\right)^{\ell \times (k+\ell)}.$
\State Compute \[\bs\bU^\top = \begin{pmatrix} \bs_1 & \bs_2 \end{pmatrix}, \] where $  \bs_1 \in \left( \mathbb{Z}/p^s\mathbb{Z}\right)^{n-k-\ell}, \bs_2 \in \left(\mathbb{Z}/p^s\mathbb{Z}\right)^\ell.$ 
\State Choose a set $\mathcal{E} \subset \{1, \ldots, k+\ell\}$ of size $\varepsilon.$
\State Build the lists $\mathcal{B}_1, \mathcal{B}_2$ as 
\[\mathcal{B}_i=\{\bx \mid \bx_{\mathcal{E}^C} \in \{0, \pm1, \ldots, \pm r\}^{(k+\ell-\varepsilon)/2}, \bx_\mathcal{E} \in  \left(\mathbb{Z}/p^s\mathbb{Z}\right)^{\varepsilon/2}, \text{wt}_L(\bx_{\mathcal{E}^C})=v/4\}. \]
\State Compute  $   \mathcal{L}_1 = \mathcal{B}_1 \concat_0 \mathcal{B}_2$ and $ 
    \mathcal{L}_2 = \mathcal{B}_1 \concat_{\bs_2} \mathcal{B}_2.$
    \State Compute  $\be \in \mathcal{L}_1 \bowtie \mathcal{L}_2$.
    \State If this fails, return to Step 1.
    \State Return $\bP^\top\be$.
\end{algorithmic}
\end{algorithm}

To ease the notation, we will denote the asymptotics of the restricted Lee-metric sphere by
 \[A_{(r)}(t,p^s) \coloneqq \lim\limits_{n \to \infty}  \frac{1}{n} \log_{p^s}\left( F_{(r)}(t(n), n, p^s) \right).\]
Further, let us denote by $W= \psi(r,t,n,p^s)/n.$

     \begin{theorem}\label{thm:cost}
     The asymptotic average time complexity of the Lee-metric BJMM algorithm on two levels is at most given by
     $I + C,$ where 
     \[I = A_{(M)}(t,n,p^s)- A_{(r)}(v, k+\ell,p^s) - A_{(M)}(t-v,n-k-\ell,p^s),\] is the expected number of iterations and 
$        C=  \max\{  B, 2D -L + U, D\} $
    is the expected cost of one iteration with 
    \begin{align*}
       B &= A_{(r)}(v/4,(k+\ell-\varepsilon)/2, p^s) +H((R+L)/2,E/2)  + E/2, \\
      D &= A_{(r)}(v/2, k+\ell-\varepsilon,p^s) + H(R+L,E)+ E-U.
    \end{align*}
      In addition, we have an expected memory of at most
$ \mathcal{M} = \max\{ B, D  \}. $
    On a capable quantum computer, the average time complexity is given by at most
    $$I/2 + \max\left\{B,D,\frac{1}{2}(2D-L+U)\right\}.$$
    \end{theorem}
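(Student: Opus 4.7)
The total time will be written as (expected iterations)$\times$(cost per iteration), so the plan is to compute these two quantities separately in the asymptotic exponent and then add them at the end.

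For the expected number of iterations, I will first argue that a permutation is \emph{good} iff the permuted target error vector splits as $(\be_1,\be_2)$ with $\be_2 \in S_{(r)}(0,v,k+\ell,p^s)$ and $\be_1 \in S(0,t-v,n-k-\ell,p^s)$; since $\be$ was chosen uniformly in $S(0,t,n,p^s)$, the probability of this event is
\[
P \;=\; \frac{F_{(r)}(v,k+\ell,p^s)\,F(t-v,n-k-\ell,p^s)}{F(t,n,p^s)}.
\]
Taking $(-1/n)\log_{p^s}$, passing to the limit, and applying Corollary \ref{asympt_V} to each of the three Lee sphere counts (using $(k+\ell)/n \to R+L$ and $(n-k-\ell)/n \to 1-R-L$) yields exactly the exponent $I$ in the statement.

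For the cost of one iteration I will analyse the three building blocks in turn. The base list $|\mathcal{B}_i|$ equals $F_{(r)}(v/4,(k+\ell-\varepsilon)/2,p^s)\binom{(k+\ell)/2}{\varepsilon/2}(p^s-1)^{\varepsilon/2}$, normalizing (via Corollary \ref{asympt_V}) to the exponent $B$. Applying Lemma \ref{lemma:cost-mc} to the two concatenation merges producing $\mathcal{L}_1,\mathcal{L}_2$ gives $\max\{B,2B-U\}$; a short calculation checking the doubling identities $2A_{(r)}(v/4,(k+\ell-\varepsilon)/2,p^s)=A_{(r)}(v/2,k+\ell-\varepsilon,p^s)$ and $2H((R+L)/2,E/2)=H(R+L,E)$ verifies $2B-U=D$, which is also the asymptotic size of each $\mathcal{L}_i$. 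Applying Corollary \ref{last-merge-cost} to the final merge contributes $\max\{D,2D-L+U\}$, with the $L-U$ coming from the positions already matched during the first merges. The overall maximum is $C = \max\{B,D,2D-L+U\}$, and because the output of the last merge need not be stored, only the intermediate lists are charged to memory, giving $\mathcal{M}=\max\{B,D\}$. The threshold $u$ is calibrated precisely so that, by the representation-count lemma, the expected number of decompositions $(\by_1,\by_2)$ of the target $\be_2$ surviving the $U$-position match is at least one, whence no additional representation-loss factor needs to be multiplied into $I$.

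For the quantum regime I will apply Grover search to the outer loop over permutations, replacing $I$ by $I/2$. Within one iteration the construction of the base lists and the two concatenation merges remain classical and hence contribute $B$ and $D$ to the exponent; but the last merge will be realised as a Grover search over $\mathcal{L}_1\times\mathcal{L}_2$ for a pair that collides on the $L-U$ still-unmatched coordinates, costing $\sqrt{|\mathcal{L}_1||\mathcal{L}_2|\,p^{-s(L-U)}}$, i.e.\ exponent $\tfrac{1}{2}(2D-L+U)$. Combining these three contributions inside the outer Grover yields the stated quantum bound. The main technical obstacle, which I will treat carefully, is the coupling between the choice of $u$ and the list sizes $|\mathcal{L}_i|$: one must confirm, using the uniformity of $\be_2$ on $S_{(r)}(0,v,k+\ell,p^s)$ from Lemma \ref{lemma:e2Distribution} together with the expected composition $\lambda$ and support size $\sigma$, that the representation count is realised in expectation and that $|\mathcal{L}_i|$ remains of exponent $D$ under this $u$. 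Once this is in place, everything else is a direct application of Lemmas \ref{lemGardySole}, \ref{asympt_smallballs}, \ref{lemma:cost-mc} and Corollary \ref{last-merge-cost}.
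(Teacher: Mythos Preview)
Your proposal is correct and follows essentially the same route as the paper: compute $P$ from the restricted/unrestricted sphere sizes to get $I$, size the base lists to get $B$, size the merged lists to get $D$, apply the two merge-cost lemmas to get $C=\max\{B,D,2D-L+U\}$, and then invoke Grover on the searchable parts while keeping $B$ and $D$ classical. The only cosmetic difference is that you obtain $D$ as $2B-U$ via the doubling identities for $A_{(r)}$ and $H$, whereas the paper computes $|\mathcal{L}_i|$ directly from the definition; both are equivalent, and your observation in fact makes more transparent why the first-merge cost $\max\{B,2B-U\}$ already appears inside $C$.
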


    \begin{proof}

For our base lists $\mathcal{B}_i$, we have that 
  \[ \mid \mathcal{B}_i \mid = F_{(r)}(v/4, (k+\ell-\varepsilon)/2,  p^s) \binom{(k+\ell)/2}{\varepsilon/2} (p^s-1)^{\varepsilon/2}.\]
Due to Lemma \ref{asympt_V}, the cost of the first merge is then given by
\begin{align*} B &= A_{(r)}(v/4, (k+\ell-\varepsilon)/2,  p^s) + H((R+L)/2,E/2)+ E/2.
\end{align*} 

    For the second merge we also need to compute the asymptotic sizes of $\mathcal{L}_i.$
    First, we note that 
    \[\mid \mathcal{L}_i \mid= \frac{F_{(r)}(v/2, k+\ell-\varepsilon, p^s) \binom{k+\ell}{\varepsilon} (p^s-1)^\varepsilon}{p^{su}}.\]
    Thus,
    \begin{align*} D = & \lim\limits_{n \to \infty} \frac{1}{n} \log_{p^s} \left( \mid \mathcal{L}_i\mid\right) = A_{(r)}(v/2, k+\ell-\varepsilon,p^s) + H(R+L,E) + E-U.
    \end{align*}
    Using Corollary \ref{last-merge-cost}, the  second merge costs asymptotically
    \begin{align*}
        2D-L+U = & 2A_{(r)}(v/2,k+\ell-\varepsilon,p^s) +2H(R+L,E)+ 2E -U -L.
    \end{align*}
    
    We recall that the success probability of the algorithm is given by $P$, 
hence for $0 \leq V \leq \varphi(r,t,n,p^s)/n$, we get the following asymptotic number of iterations
    \begin{align*}
       A_{(M)}(t,n,p^s) - A_{(r)}(v, k+\ell,p^s) - A_{(M)}(t-v,n-k-\ell,p^s) .
    \end{align*}
    
  The average memory required for the algorithm is given by $\mid \mathcal{B}_i \mid$ and $\mid \mathcal{L}_i \mid,$ thus taking the asymptotic of these lists the claim follows.

    Finally, note that Grover's algorithm can be used to speed up on a capable quantum computer whenever a list $L$ has to be searched. In particular, instead of $\mathcal{O}(\mid L \mid)$, Grover's algorithm only requires $\mathcal{O}(\sqrt{\mid L \mid})$ operations. Thus, this results asymptotically in $\lim_{n \to \infty} \frac{1}{2n} \log_{p^s}(\mid L \mid).$ In our classical asymptotic cost, every term stems from a searched list, except for $B$ and $D$, which are intermediate lists that have to be stored in full. 
     
\end{proof}
  Observe that $\ell, v,r, \varepsilon$ are  internal parameters, which can be chosen optimal, i.e., such that the algorithm achieves the minimal cost. Clearly, the choice for the threshold $r$ will influence the possible choices for $v.$\\

  \noindent\textit{The Amortized Case:} \ If we only consider $p^{su}$ many vectors from the base lists $\mathcal{B}_i$, we could potentially reduce the cost and memory.
  
  The algorithm is going to work exactly the same way, with the only difference that the base lists $\mathcal{B}_i'$ have size $p^{su}$. Thus, after using the merging Algorithm \ref{algo:merge-concat} on $u$ positions we get lists $\mathcal{L}_i'$ of size $p^{su}$ as well. Finally, we merge these lists using Algorithm \ref{algo:last-merge} on $\ell$ positions.
  Note that the conditions on $U= \lim_{n \to \infty} u(n)/n$ are 
\[     L/3 \leq  U \leq \min\{(R+L) \gamma(v/2) + H(R+L-S,E)+E, B,L\},\] where $B$ denotes the asymptotic size of the original base lists, i.e.,
  \[ B = A_{(r)}(v/4,(k+\ell-\varepsilon)/2, p^s) +H((R+L)/2, E/2) + E/2. \]
The condition $L/3 \leq U$, comes from the size of the final list, i.e., the number of solutions for the smaller instance, which is $\frac{p^{s2u}}{p^{s(\ell-u)}} = p^{s(3u-\ell)}.$ In order to have at least one solution, we require $3u \geq \ell.$
Recall that  $(R+L) \gamma(v/2) + H(R+L-S,E)+E$ denotes the asymptotic number of representations, thus the condition $U \leq (R+L) \gamma(v/2) + H(R+L-S,E)+E$ is the same as for the original algorithm. The condition $U \leq B$, as well as $U \leq L$ are straightforward. 
  
  Note that in the amortized case, the success probability of splitting $\be=(\be_1,\be_2)$ is not simply given by  \[P=F_{(r)}(v,k+\ell,p^s)F(t-v,n-k-\ell,p^s)F(t,n,p^s)^{-1}\] as in the non-amortized case, since our list of $\be_2$ is by construction smaller. That is instead of all solutions to the smaller problem $F_{(r)}(v,k+\ell,p^s)p^{-s\ell}$, we only consider $Z$ many solutions to the smaller problem.  In other words, $Z$ is  the number of distinct $\be_2$ in our last list.
Similar to the approach of \cite{thomas}, we have a success probability of 
  \[P'= Zp^{s\ell}  F(t-v,n-k-\ell,p^s)F(t,n,p^s)^{-1}.\]
 In order to compute $Z$, let us denote by $X$ the maximal amount of collisions of the last merge which would lead to an $\be_2$ (that is with possible repetitions), by $Y$ the total number of solutions to $\be_2\bB^\top=\bs_2$ with $\be_2 \in S_{(r)}(v,k+\ell,p^s)$, namely 
 \[Y= F_{(r)}(v,k+\ell,p^s) p^{-s\ell},\] and finally by $W$ the number of collisions that we are considering, that is 
 \[W= p^{s(3u-\ell)} = p^{s2u} p^{-s(\ell-u)}.\]
 This leaves us with a combinatorial problem: having a basket with $X$ balls having $Y$ colors, if we pick $W$ balls at random, how many colors are we going to see on average? This will determine the number of distinct tuples $\be_2$ in the final list. This number is on average 
 \[Y\left( 1- \binom{X-X/Y}{W} \binom{X}{W}^{-1}\right),\]
 which can be lower bounded by $W.$
 In fact, 
 \begin{align*}
     1-\binom{X-X/Y}{W}\binom{X}{W}^{-1}     = 1- \frac{(X-X/Y+1-W) \cdots (X-W)}{(X-X/Y+1) \cdots X} \\  \geq 1-(1-W/X)^{X/Y} \sim W/Y.
 \end{align*}
  Hence, $Z \geq p^{s(3u-\ell)}$ and we get a success probability of at least 
  \[p^{s3u}  F(t-v,n-k-\ell,p^s)F(t,n,p^s)^{-1}.\]
  The asymptotic cost of the amortized version of Algorithm \ref{algo:bjmm} is then given by 
  $I'+ \max\{U,3U-L\},$ where $I'$ is   the expected number of iterations, i.e., \[I' \leq A_{(M)}(t,n,p^s)- 3U - A_{(M)}(t-v,n-k-\ell,p^s).\]
  Hence, we can see that the restriction to the smaller balls does not influence the amortized version of BJMM, as the idea of amortizing is already to restrict the balls. The restriction only influences the conditions and thus the possible choices of $U.$
  
  \subsection{Decoding beyond the Minimum Distance}\label{bigball}

There could be scenarios where one wants to decode more errors than the minimum Lee distance of the code at hand allows. In the classical case, i.e., in the Hamming metric,  the cost can then be divided by the expected number of solutions $N$. This follows from the fact that for each of the $N$ solutions we have a success probability $P$ for one iteration to succeed. Assuming that the solutions are independent, this implies that to find one solution we expect the number of iterations to be $\frac{1}{PN}$. 
  
  In a scenario where we have $t>Mn/2$, the marginal distribution of $\be \in \left(\mathbb{Z}/p^s\mathbb{Z}\right)^n$ implies that  $ \pm M$ is the most likely entry of  $\be$, then the second most likely is $\pm (M-1)$ and so on, until the least likely entry is $0.$ 
In this case, we will reverse the previous algorithm and for some threshold Lee weight $0\leq r \leq M$, we want the vector $\be_2$ of Lee weight $t-\varphi(r-1,t,n,p^s) \leq v \leq t$ to live in $\{\pm r, \ldots, \pm M\}^{k+\ell}$. In order to construct such a vector, we will use a similar construction as before, where we exchange the set $\{ 0, \pm 1, \ldots, \pm r\}$ with $\{\pm r, \ldots, \pm M\}$. Note that  the success probability of such splitting is now given by 
\[P = F^{(r)}(v,k+\ell,p^s) F(t-v,n-k-\ell,p^s) F(t,n,p^s)^{-1}.\]

Let us first illustrate the idea and then compute the sizes of the lists involved.
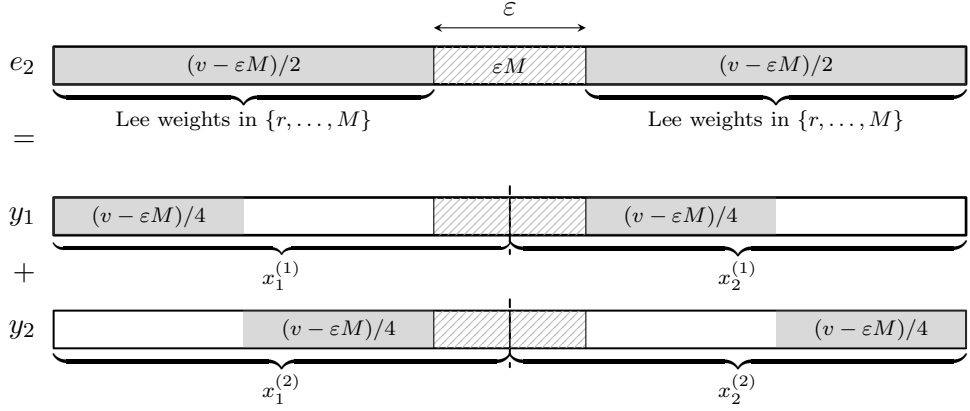
\begin{figure}[h!]
    \begin{center}
        \begin{tikzpicture}[line cap=round,line join=round,>=triangle 45,x=1cm,y=1cm]
            \draw (-1.1, 3.75) node[left] {$e_2 $};
            \draw[line width=1pt] (-1,4) -- (-1, 3.5) -- (11,3.5) -- (11,4) -- cycle;
            
            \draw[stealth-stealth] (4, 4.25) -- (6, 4.25);
            \draw (5, 4.5) node[] {$\varepsilon$};
            
            \fill[line width=1pt,fill = gray, fill opacity=0.3] (-1,3.5) rectangle (4,4);
            \draw (1.5,3.75) node[] {\scriptsize $(v- \varepsilon M)/2$};
            \draw (1.5, 3.6) node[below] {$\underbrace{\hspace{5cm}}$};
            \draw (1.5, 3.3) node[below] {\scriptsize Lee weights in $\set{r, \dots, M}$};
            
            \fill[line width=1pt,fill = gray, fill opacity=0.3] (11,3.5) rectangle (6,4);
            \draw (8.5, 3.75) node[] {\scriptsize $(v- \varepsilon M)/2$};
            \draw (8.5, 3.6) node[below] {$\underbrace{\hspace{5cm}}$};
            \draw (8.5, 3.3) node[below] {\scriptsize Lee weights in $\set{r, \dots, M}$};
            
            \draw[pattern = north east lines, pattern color = lightgray] (4, 4) rectangle (6, 3.5);
            \draw (5,3.75) node[] {\scriptsize $\varepsilon M$};
        
        \draw (-1.1, 2.75) node[left] {$=$};
        \draw (-1.1, 1) node[left] {$+$};
            \draw (-1.1,1.75) node[left] {$y_1$};
            \draw[line width=1pt] (-1,1.5) -- (-1,2) -- (11,2) -- (11,1.5) -- cycle;
            \draw[line width = 0.75pt, style = dashed] (5, 2.15) -- (5, 1.25);
            
            \fill[line width=1pt,fill = gray, fill opacity=0.3] (-1,1.5) rectangle (1.5,2);
            \draw (0.25,1.75) node[] {\scriptsize $(v- \varepsilon M)/4$};
            \draw (2, 1.6) node[below] {$\scriptsize\underbrace{\hspace{6cm}}$};
            \draw (2, 1.3) node[below] {\scriptsize $x_1^{(1)}$};
            \draw[pattern = north east lines, pattern color = lightgray] (4,1.5) rectangle (5,2);

            \fill[line width=1pt,fill = gray, fill opacity=0.3] (6,1.5) rectangle (8.5,2);
            \draw (7.25, 1.75) node[] {\scriptsize $(v- \varepsilon M)/4$};
            \draw (8, 1.6) node[below] {$\underbrace{\hspace{6cm}}$};
            \draw (8, 1.3) node[below] {\scriptsize $x_2^{(1)}$};
            \draw[pattern = north east lines, pattern color = lightgray] (5,1.5) rectangle (6,2);

            \draw (-1.1,0.25) node[left] {$y_2 $}; 
            \draw[line width=0.75pt] (-1,0.5) -- (-1,0) -- (11,0) -- (11,0.5) -- cycle;
            \draw[line width = 0.75pt, style = dashed] (5, 0.65) -- (5, -0.25);
    
            \fill[line width=1pt,fill = gray, fill opacity=0.3] (4,0) rectangle (1.5,0.5);
            \draw (2.75, 0.25) node[] {\scriptsize $(v- \varepsilon M)/4$};
            \draw (2, 0.1) node[below] {$\underbrace{\hspace{6cm}}$};
            \draw (2, -0.2) node[below] {\scriptsize $x_1^{(2)}$};
            \draw[pattern = north east lines, pattern color = lightgray] (5,0.5) rectangle (4,0);
    
            \fill[line width=1pt,fill = gray, fill opacity=0.3] (11,0) rectangle (8.5,0.5);
            \draw (9.75, 0.25) node[] {\scriptsize $(v- \varepsilon M)/4$};
            \draw (8, 0.1) node[below] {$\underbrace{\hspace{6cm}}$};
            \draw (8, -0.2) node[below] {\scriptsize $x_2^{(2)}$};
            \draw[pattern = north east lines, pattern color = lightgray] (6,0.5) rectangle (5,0);
        \end{tikzpicture}
    \end{center}
\caption{Illustration of two levels decomposition of the vector $\be_2$ into $\mathbf{y}_1$ and $\mathbf{y}_2$, where $\mathbf{y}_i = (\mathbf{x}_1^{(i)} ,\mathbf{x}_2^{(i)} )$ for $i = 1, 2$. The gray areas denote the support of the vectors and the values inside the area are the corresponding Lee weights. For $(\mathbf{y}_1)_i$ and $(\mathbf{y}_2)_i$ with $i \in \mathcal{E}$, we require $\lweight{(\mathbf{y}_1)_i + (\mathbf{y}_2)_i} = M$.}
\end{figure}
Note, that one of the main differences to the previous algorithm is that we require to partition the weights in order to guarantee that the large weight entries of $\by_1$ will not be decreased after adding $\by_2$. For this let us introduce the following set of indices $Z_1,Z_2, W_1,W_2,\mathcal{E}_1,\mathcal{E}_2$ satisfying
$\mid Z_i \mid = \mid W_i \mid = (k+\ell-\varepsilon)/4,\, Z_1 \cap Z_2 = \emptyset, W_1 \cap W_2 = \emptyset, \mathcal{E}_1\cap \mathcal{E}_2=\emptyset \text{ and } Z_i \cap W_i \cap \mathcal{E}_i = \emptyset \text{ for } i \in \set{1, 2}.$
Let us denote their union by $\mathcal{E}= \mathcal{E}_1 \cup \mathcal{E}_2, Z=Z_1\cup Z_2, W=W_2\cup W_2.$  For $i \in \set{1, 2}$, the base lists $\mathcal{B}_i$ are then
\begin{align*}
    \mathcal{B}_i =  \Big\{\nu_i(\bx) \ \big|  & \ \bx_{Z_i}  \in  \{0\}^{(k+\ell-\varepsilon)/4}, \bx_{W_i}  \in  \{\pm r, \ldots, \pm M\}^{(k+\ell-\varepsilon)/4},  \bx_{\mathcal{E}_i}  \in  \left(\mathbb{Z}/p^s\mathbb{Z}\right)^{\varepsilon/2}, \\ & \text{wt}_L(\bx_{W_i})= (v-\varepsilon M)/4, \nu_i  \in  S_{(k+\ell)/2} \Big\}.
\end{align*} 
 All of the base lists have the same size, which is given by 
 \[\binom{ (k+\ell)/2}{\varepsilon/2} p^{s\varepsilon/2}\binom{(k+\ell-\varepsilon)/2}{(k+\ell-\varepsilon)/4} F^{(r)}\big((v-\varepsilon M)/4, (k+\ell-\varepsilon)/4,p^s\big).\]
Performing the concatenation merge of Algorithm \ref{algo:merge-concat}, we build $\mathcal{L}_1$ and $\mathcal{L}_2$ from $\mathcal{B}_1$ and $\mathcal{B}_2$ as
 \begin{align*}
     \mathcal{L}_1  =  \Big\{ \mu_1(\by_1)  \ \big| \  & \mu_1  \in  S_{k+\ell},\by_1\bB^\top =_u \mathbf{0}, (\by_1)_{Z}  \in  \{0\}^{(k+\ell-\varepsilon)/2},(\by_1)_{\mathcal{E}}  \in  \left( \mathbb{Z}/p^s\mathbb{Z} \right)^{\varepsilon},\\ &   (\by_1)_{W}  \in  \{\pm r, \ldots, \pm M\}^{(k+\ell-\varepsilon)/2}, \text{wt}_L((\by_1)_{W})= (v-\varepsilon M)/2 \Big\}, \\
      \mathcal{L}_2  =  \Big\{ \mu_2(\by_2) \ \big| \ & \mu_2  \in  S_{k+\ell}, \by_2\bB^\top =_u \mathbf{s}_2,  (\by_2)_{Z}  \in  \{0\}^{(k+\ell-\varepsilon)/2},(\by_2)_{\mathcal{E}}  \in  \left( \mathbb{Z}/p^s\mathbb{Z} \right)^{\varepsilon}, \\ & (\by_2)_{W}  \in  \{\pm r, \ldots, \pm M\}^{(k+\ell-\varepsilon)/2},   \text{wt}_L((\by_2)_{W})= (v-\varepsilon M)/2  \Big\}.
 \end{align*} 
 Both lists are of size
 \[ \binom{k+\ell}{\varepsilon}p^{s(\varepsilon-u)}\binom{k+\ell-\varepsilon}{(k+\ell-\varepsilon)/2}F^{(r)}\big((v-\varepsilon M)/2, (k+\ell-\varepsilon)/2,p^s\big).\]
 For this procedure to work, we also need the additional condition on $v,r$ and $\varepsilon$, that
 \[v \geq \varepsilon(M-r)+ r(k+\ell).\]
 Then, a final merge using Algorithm \ref{algo:last-merge} will produce a final list of all smaller solutions of the smaller instance which does not require to be stored. 

\begin{lemma}
  The number of representations $\be_2=\by_1+\by_2$ for $(\by_1, \by_2) \in \mathcal{L}_1 \times \mathcal{L}_2$ is then given by at least
 \begin{align*}
     R_B=  & \binom{k+\ell}{\varepsilon} \left( \sum_{i=0}^\varepsilon \binom{\varepsilon}{i} (M-r+1)^i r^{\varepsilon-i} \right.  \left. \binom{\varepsilon'}{i} (M-r+1)^i \binom{\varepsilon'-i}{(\varepsilon'-i)/2} \right),
 \end{align*}
 for $\varepsilon'=k+\ell-\varepsilon.$
\end{lemma}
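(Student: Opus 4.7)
The plan is to count, for a fixed $\be_2 \in \{\pm r,\ldots,\pm M\}^{k+\ell}$ of Lee weight $v$, the ordered pairs $(\by_1,\by_2) \in \mathcal{L}_1 \times \mathcal{L}_2$ with $\by_1+\by_2 = \be_2$. The leading factor $\binom{k+\ell}{\varepsilon}$ in the stated lower bound reflects the choice of the set $\mathcal{E}$ of $\varepsilon$ positions where both $\by_1$ and $\by_2$ may take unconstrained values in $\mathbb{Z}/p^s\mathbb{Z}$; since different choices of $\mathcal{E}$ yield disjoint families of decompositions, this factor multiplies the remaining count.

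For a fixed $\mathcal{E}$, I would parametrize the enumeration by an integer $i \in \{0,\ldots,\varepsilon\}$ tracking the number of $\mathcal{E}$-positions at which $\by_1$ receives a large-weight value (Lee weight in $\{r,\ldots,M\}$), paired with a matching count $i$ among the $\varepsilon'$ non-$\mathcal{E}$ positions. The factor $\binom{\varepsilon}{i}(M-r+1)^i r^{\varepsilon-i}$ enumerates the configurations inside $\mathcal{E}$: select $i$ positions to take one of $M-r+1$ large Lee weights in $\{r,\ldots,M\}$, and assign one of $r$ small Lee weights in $\{0,\ldots,r-1\}$ to each of the remaining $\varepsilon-i$ positions. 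A symmetric argument for the $i$ selected non-$\mathcal{E}$ positions, giving $\binom{\varepsilon'}{i}(M-r+1)^i$, accounts for the value of $\by_1$ (and hence of $\by_2 = \be_2 - \by_1$) at those coordinates.

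For the remaining $\varepsilon'-i$ non-$\mathcal{E}$ positions, exactly one of $\by_1,\by_2$ carries the $\be_2$ value while the other is zero, and the Lee weight budget $(v-\varepsilon M)/2$ on each side forces an even split between the two supports. This contributes the balanced binomial $\binom{\varepsilon'-i}{(\varepsilon'-i)/2}$. Multiplying the pieces and summing over $i$ then yields $R_B$ as claimed.

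The bound is a lower bound rather than an equality because only one canonical family of representations is enumerated; additional decompositions arising from sign freedom, modular cancellations in $\mathbb{Z}/p^s\mathbb{Z}$, or alternative weight allocations in $\mathcal{E}$ that still respect the weight budget can only increase the total. The main obstacle I expect is to verify carefully that each configuration indexed by $(\mathcal{E},i)$ with the associated choices produces a valid pair $(\by_1,\by_2) \in \mathcal{L}_1 \times \mathcal{L}_2$ that indeed sums to $\be_2$ and respects the prescribed partition $W_1,W_2,Z_1,Z_2,\mathcal{E}_1,\mathcal{E}_2$ of the coordinates coming from the base lists, ensuring no double-counting across different $\mathcal{E}$ or different values of $i$.
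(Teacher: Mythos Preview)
Your proposal has a structural gap. You assume that $\by_1$ and $\by_2$ share a single free set $\mathcal{E}$; but because the lists $\mathcal{L}_1,\mathcal{L}_2$ carry independent permutations $\mu_1,\mu_2$, each $\by_j$ has its \emph{own} free region $\mathcal{E}_j$ of size $\varepsilon$, and these need not coincide. The paper's construction exploits exactly this: it first fixes $\mathcal{E}_1$ (the $\binom{k+\ell}{\varepsilon}$ factor), then lets $i$ denote the size of $\mathcal{E}_1\setminus\mathcal{E}_2$. Since $|\mathcal{E}_2|=\varepsilon$ as well, this forces $|\mathcal{E}_1\cap\mathcal{E}_2|=\varepsilon-i$ and $|\mathcal{E}_2\setminus\mathcal{E}_1|=i$, which is precisely why the same $i$ appears in the $\binom{\varepsilon'}{i}$ factor for the positions outside $\mathcal{E}_1$. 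Under your shared-$\mathcal{E}$ reading there is no reason at all for the inside and outside counts to be linked by the same parameter $i$.

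Your reading of the value counts is also reversed. The $(M-r+1)^i$ attached to the $\mathcal{E}_1\setminus\mathcal{E}_2$ positions does not count large Lee weights for $\by_1$; it counts, for each fixed target entry $a$ of $\be_2$, the \emph{small} values of $\by_1$ for which $\by_2=a-\by_1$ lands in the large-weight set $\{\pm r,\dots,\pm M\}$ (so that $\by_2$ is legal there without needing $\mathcal{E}_2$). That set always has $M-r+1$ elements regardless of $a$. Symmetrically, the second $(M-r+1)^i$ counts small values of $\by_2$ on $\mathcal{E}_2\setminus\mathcal{E}_1$. The $r^{\varepsilon-i}$ lives on the overlap $\mathcal{E}_1\cap\mathcal{E}_2$. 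If you instead place both vectors unconstrained on the same $\mathcal{E}$, the natural count on those positions would be $p^{s\varepsilon}$, not $\sum_i\binom{\varepsilon}{i}(M-r+1)^i r^{\varepsilon-i}$, so your interpretation cannot reproduce the stated formula. To repair the argument you need to let $\mathcal{E}_1$ and $\mathcal{E}_2$ vary independently and track their overlap.
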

\begin{proof}
To give a lower bound on the number of representations it is enough to give one construction.\\
The overall idea of this construction is to split the $\mathcal{E}_1$ positions of $\by_1$ and $\mathcal{E}_2$ positions of $\by_2$ into those parts where they overlap and those parts where they do not overlap. In the parts where $\mathcal{E}_1$ does not overlap with $\mathcal{E}_2$, we can only allow small Lee weights in $\by_1$ such that, by adding large Lee weight entries of $\by_2$, we can still reach the large Lee weight entries of $\be_2.$ 

So let us consider a fixed $\be_2 \in F^{(r)}(v, k+\ell,p^s).$ As a first step we fix the $\mathcal{E}_1$ positions which gives $\binom{k+\ell}{\varepsilon}$. Then, within the $\mathcal{E}_1$ position we fix those of small Lee weight. This means for a fixed position we can assume that the entry in $\be_2$ is $a$ with $r \leq \text{wt}_L(a) \leq M$. Small Lee weights of $\by_1$ now refer to the possible values of $\by_1$ in this position such that $a$ can be reached through large Lee weight entries of $\by_2$. That is, for example if $a=r$, we allow in $\by_1$ the entries $\{0, -1, \ldots, r-M\}$, or if $a=M$ we allow in $\by_1$ the entries $\{M-r, \ldots, 0\}.$ These allowed sets of small Lee weight always have size $M-r+1$, independently of the the value $a$. Thus, in $\mathcal{E}_1$  of size $\varepsilon$ we choose $i$ entries of small Lee weight, which give $\binom{\varepsilon}{i}(M-r+1)^i$ many choices. For the remaining $\varepsilon-i$ positions in $\mathcal{E}_1$ we have large Lee weights in $\by_1$, which cannot reach the large Lee weight entries of $\be_2$ through large Lee weight entries in $\by_2$. Thus, they must come for the $\mathcal{E}_2$ positions. In these entries we have $r^{\varepsilon-i}$ possible choices. Note that out of the $\varepsilon$ many positions of $\mathcal{E}_2$  we have only assigned $\varepsilon-i$ many. Hence, as a next step we choose of the remaining $k+\ell-\varepsilon$ positions the remaining $i$ positions to have small Lee weight in $\by_2$. Thus, the fixed large Lee weight entries of $\be_2$ can be reached by adding these positions to large Lee weight entries of $\by_1$. For this we have $\binom{k+\ell-\varepsilon-i}{i}(M-r+1)^i$ possibilities. As a final step we then partition the remaining positions to either be 0 or of large Lee weight, i.e., $\binom{k+\ell-\varepsilon-i}{(k+\ell-\varepsilon-i)/2}$.  
 \end{proof}
 Thus, we will need the additional condition $\varepsilon \leq (k+\ell)/2$ and we  choose \[ u = \left\lfloor \log_{p^s}(R_B) \right\rfloor.\]
Since we cannot take the asymptotic of an infinite sum, we need to bound this  quantity. In fact, setting $i=\varepsilon$ gives such lower bound.
\[R_B \geq \binom{k+\ell}{\varepsilon} (M-r+1)^{2\varepsilon}   \binom{k+\ell-\varepsilon}{\varepsilon} \binom{k+\ell-2\varepsilon}{(k+\ell-2\varepsilon)/2}.\]
 Then, \begin{align*} U   = \lim_{n \to \infty} U(n)/n = &  H(R+L,E)+2E\log_{p^s}(M-r+1) \\ & +H(R+L-E,E)+H(R+L-2E, (R+
 L-2E)/2). 
 \end{align*}
 
In addition, since we decode beyond the minimum distance, the LSDP has several solutions. Since the inputs have been chosen uniform at random, we can assume that these solutions are independent from each other. 
Thus, to find just one of all the expected 
\[N = \frac{F(t,n,p^s)}{p^{s(n-k)}}\] solutions we have an expected number of iterations given by $(NP)^{-1},$ instead of $P^{-1}$. Note that asymptotically this value is bounded by $R$, as
\[ X= \lim_{n \to \infty} \frac{1}{n} \log_{p^s} \left( F(t,n,p^s)p^{-s(n-k)} \right) = A_{(M)}(t,n,p^s)-1+R \leq   R. \]
Let us denote by $A^{(r)}(t,n,p^s) = \lim\limits_{n \to \infty} 1/n \log_q( F^{(r)}(t,n,p^s))$.
     \begin{corollary}
     The asymptotic average time complexity of the Lee-metric BJMM algorithm on two levels for $t>Mn/2$ is given by at most $I+C,$ where
     \begin{align*} I= & (1-R) -A^{(r)}(v,k+\ell,p^s) -A_{(M)}(t-v,n-k-\ell,p^s)\end{align*}
     is the expected number of iterations and  $C= \max\{B, D, 2D-L+U\}$ is the cost of one iteration, where 
     \begin{align*}
         B =E/2+ &  H((R+L)/2,E/2) +H((R+L-E)/2,(R+L-E)/4) \\ & +A^{(r)}((v-\varepsilon M)/4, (k+\ell-\varepsilon)/4,p^s), \\
         D=E-U+ &  H(R+L,E) +H(R+L-E,(R+L-E)/2)\\ & +A^{(r)}((v-\varepsilon M)/2, (k+\ell-\varepsilon)/2,p^s).
     \end{align*}
      In addition, we have an expected memory of at most  $\mathcal{M}= \max\{B,D \}.$ 
    On a capable quantum computer, the average time complexity is given by at most
    $I/2 +\max\{B,D, \frac{1}{2}(2D-L+U) \}.$
    \end{corollary}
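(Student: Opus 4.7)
The plan is to reuse the proof of Theorem \ref{thm:cost} almost verbatim, with three substitutions: (i) the restricted sphere $F_{(r)}$ is replaced by its complement $F^{(r)}$ throughout, so that parts 3 and 4 of Corollary \ref{asympt_V} are invoked in place of parts 1 and 2; (ii) the representation count is the simple $i = \varepsilon$ lower bound on $R_B$ extracted in the paragraph preceding the corollary, which fixes the asymptotic $U$ given there; (iii) since $t > Mn/2$ there are on average $N = F(t,n,p^s)\,p^{-s(n-k)}$ independent solutions, so the per-iteration success probability is effectively $NP$ rather than $P$.

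I would first compute $\lim\frac{1}{n}\log_{p^s}|\mathcal{B}_i|$ directly from the explicit base-list size stated above the corollary: the two binomial coefficients contribute $H((R+L)/2,E/2)$ and $H((R+L-E)/2,(R+L-E)/4)$, the factor $p^{s\varepsilon/2}$ contributes $E/2$, and the $F^{(r)}$ factor contributes $A^{(r)}((v-\varepsilon M)/4,(k+\ell-\varepsilon)/4,p^s)$; these sum to $B$. The same calculation applied to $|\mathcal{L}_i|$, now with $p^{s(\varepsilon - u)}$ producing the $-U$ term, yields $D$.

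Next, the per-iteration cost is assembled from Lemma \ref{lemma:cost-mc} applied to the two first-level merges (cost $\max\{B, 2B-U\}$) and Corollary \ref{last-merge-cost} applied to the final merge (cost $\max\{D, 2D-L+U\}$); exactly as in the proof of Theorem \ref{thm:cost}, at the optimum $2B - U$ is dominated by $D$, collapsing the maximum to $C = \max\{B, D, 2D-L+U\}$. The iteration count $(NP)^{-1}$ has asymptotic $I = (1-R) - A^{(r)}(v,k+\ell,p^s) - A_{(M)}(t-v,n-k-\ell,p^s)$ after the two $F(t,n,p^s)$ factors cancel. The memory bound $\mathcal{M} = \max\{B, D\}$ is immediate since only the base and intermediate lists are stored; the quantum bound then follows by Grover-searching both the outer iteration loop and every list searched inside a merge, which leaves the stored-list terms $B$ and $D$ unchanged while halving $I$ and $2D-L+U$.

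The main subtlety will be confirming that the crude $i = \varepsilon$ lower bound on $R_B$ is simultaneously compatible with the construction constraints $v \ge \varepsilon(M-r) + r(k+\ell)$ and $\varepsilon \le (k+\ell)/2$ and keeps $U$ within its admissible range; any looseness in this bound only inflates the per-iteration probability polynomially, which is absorbed in the $o(n)$ error of the saddle-point asymptotic and is consistent with the corollary being stated as an upper bound.
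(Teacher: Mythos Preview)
Your proposal is correct and mirrors the paper's approach: the corollary is not given a separate proof in the paper but is left as the evident analogue of Theorem \ref{thm:cost}, with the list sizes, the lower bound on $R_B$, and the factor $N$ all spelled out in the text immediately preceding the statement. Your three substitutions (i)--(iii) and the ensuing computations of $B$, $D$, $I$, $\mathcal{M}$ and the quantum bound are exactly what that text supplies.
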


  \noindent \textit{The Amortized Case:} \ We  consider again the amortized version of this algorithm, i.e., we only take $p^{su}$ many vectors from the base lists $\mathcal{B}_i^{(1)}$, respectively $\mathcal{B}_i^{(2)}$.
  
  The algorithm is going to work exactly the same way,  similar to the amortized version for the first scenario. 
  The asymptotic cost of the amortized version of Algorithm \ref{algo:bjmm} is then given by 
  $I'+ \max\{U,3U-L\},$ where $I'$ is as before the expected number of iterations, i.e., \begin{align*} I' \leq & (1-R)-  3U - A_{(M)}(t-v,n-k-\ell,p^s). 
  \end{align*}

\section{Comparison}\label{sec:comparison}
In this section we want to see how much cost reduction we were able to achieve by using this additional information on the error vector. For this we will compare the new Lee-metric BJMM algorithm to the Lee-metric BJMM algorithm from \cite{leenp} and to the algorithm using Wagner's approach in \cite{thomas}, which were until now the fastest algorithms to solve the LSDP.
We denote by $e(R,p^s)$ the exponent of the asymptotic cost and compare $e(R^*,p^s)$ for $R^* = \arg\max\limits_{0 \leq R \leq 1} \left( e(R,p^s) \right)$.

In the first scenario, we only decode up to the Gilbert-Varshamov bound, i.e., we consider 
$V(d(n),n,p^s) = 1-R.$
Hence, we give an immediate relation between $T$ and $R,$ where $T$ is  $\lim_{n \to \infty} d(n)/n$, i.e., we are considering full-distance decoding. 

\begin{figure}
    \centering
    \includegraphics[width=\textwidth]{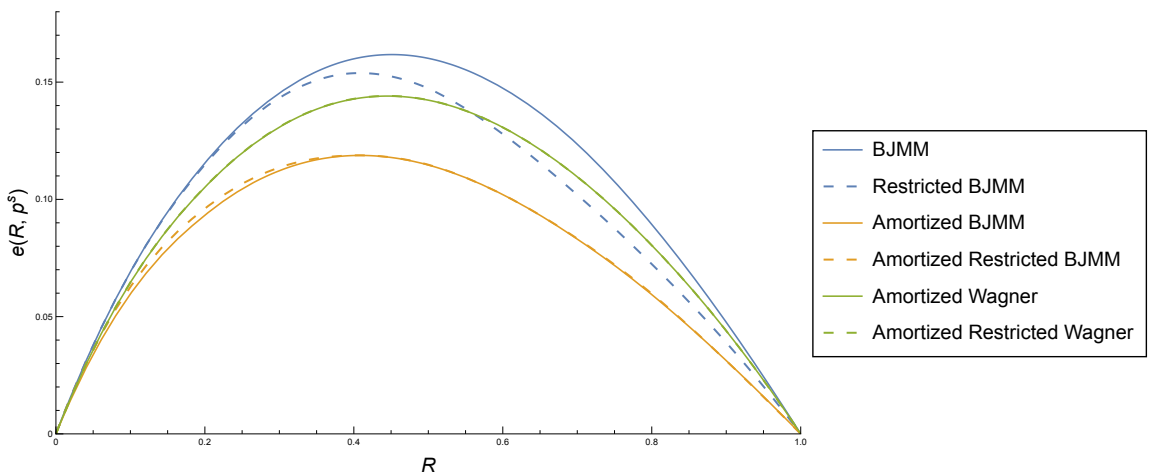}
    \caption{Comparison of asymptotic costs of full-distance decoding of different algorithms and their restricted versions, for $p=47, s=1$ and assuming the asymptotic Gilbert-Varshamov bound.}
    \label{fig:plot_FDD}
\end{figure}

\renewcommand{\arraystretch}{1.2}
\begin{table}[ht]
 \begin{center}
 \begin{tabular}{|c|c|c|}  
 \hline 
  Algorithm & $~~~e(R^*,p^s)~~~$ & $~~~R^*~~~$   \\\hline
   Lee-BJMM & 0.1618 & 0.451   \\
   Restricted Lee-BJMM for $r=5$ & 0.1539 & 0.408  \\
   Amortized Lee-BJMM & 0.1205  & 0.396 \\ 
   Amortized Restricted Lee-BJMM & 0.1189 & 0.406  \\
   Amortized Lee-Wagner & 0.1441 & 0.445  \\
   Amortized Restricted Lee-Wagner & 0.1441 & 0.445 \\ 
  \hline
  \end{tabular}
\end{center}  
  \caption{Comparison of asymptotic costs for full-distance decoding  for $p^s=47$.}\label{tab:comparison_FDD}
\end{table}

In the second scenario, where we have $N>1$ solutions, one possible technique proposed in   \cite{thomas} is to fix a rate $R \in \{0.1,\ldots, 0.9\}$ and go through all $M/2 \leq T \leq M$, to see at which $T$ the largest cost is  attained for this fixed rate. However, this approach gives for the algorithm in \cite{thomas} as well as for our algorithm always $T=M$. This is a very particular weight, where $\be$ will only have entries $\pm M$. The problem of decoding such instance is then a completely different one from the original problem and more like a binary SDP. As the algorithm in \cite{thomas} and also our algorithm work for any large $T$, they will clearly not be suitable for this special scenario.

Another possible technique is the following: the asymptotic value for $N$ is given by 
\[X=A_{(M)}(t,n,p^s)-1+R \leq R,\] thus we can fix $X$ to be a function in $R$, e.g. $X=R/2$. This will also directly lead to a $T=\lim\limits_{n \to \infty} t(n)/n$, for which $A_{(M)}(t,n,p^s)=1- R/2.$
If we would have fixed $X$ to be a constant independent of $R$ instead, this would have obstructed the comparison for all rates smaller than this constant. 
To compare the asymptotic costs of several algorithms we then determine the rate for which the cost is maximal. 
Since there is no other non-amortized algorithm which considers the second case, we will only  compare our amortized version with the algorithm provided in \cite{thomas}.

We observed that in the second case, where we decode  beyond the minimum distance, $\varepsilon$ is  very small. Note that $\varepsilon$ was introduced in \cite{rep} to increase the  number of positions on which we can merge $u$. In our algorithm, however, $u$ can be chosen very large, in fact, very close to $\ell$, even for $\varepsilon=0.$ Thus, $\varepsilon>0$ would only increase the size of the lists. We also want to note here that the program we are considering in Figure \ref{fig:plot_beyond} takes the minimum of the cost of our algorithm and the cost of brute forcing. For this note that we fixed the number of solutions to be $p^{s(k/2)}$, thus going through all vectors $\be$ of weight $t$ we expect to find a solution after $F(t,n,p^s)p^{-s(k/2)}$ many steps, that has an asymptotic cost of $A_{(M)}(t,n,p^s)-R/2= 1-R$. On the other hand, we might go through all solutions of the parity-check equations, which are $p^{sk}$ many and expect to find a solution after $p^{s(k-k/2)}$ many steps, which has an asymptotic cost of $R/2.$

\begin{figure}
    \centering
    \includegraphics[width=\textwidth]{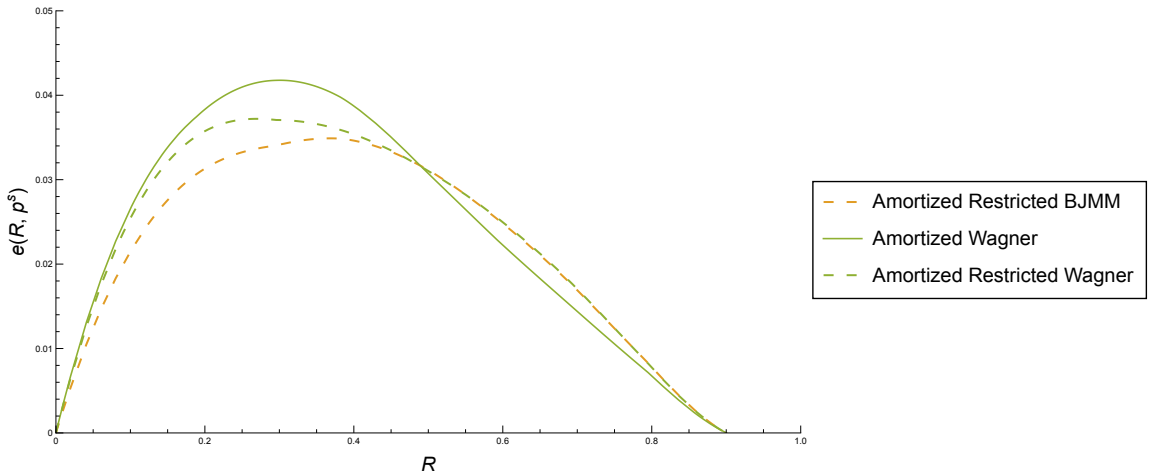}
    \caption{Comparison of asymptotic costs of decoding  beyond the minimum distance of different algorithms and their restricted versions, for $q=47$.}
    \label{fig:plot_beyond}
\end{figure}

\begin{table}[ht]
 \begin{center}
 \begin{tabular}{|c|c|c|}
 \hline 
  Algorithm & $~~~e(R^*,q)~~~$ & $~~~R^*~~~$ \\\hline
   Amortized Restricted Lee-BJMM & 0.0349 & 0.368 \\ 
      Amortized Lee-Wagner & 0.0418 & 0.301 \\
   Amortized Restricted Lee-Wagner & 0.0372 & 0.270 \\
  \hline
  \end{tabular}
\end{center}  
  \caption{Comparison of asymptotic cost of different Lee metric ISD algorithms for $p=47, s=1$  beyond the minimum distance.}\label{tab:comparison_BD}
\end{table}

\begin{remark}
This approach can work for any metric and ambient space, as long as the distribution of the error vector allows us to solve the smaller instance in a smaller space. 
This might have an impact for the RLWE problem, since also there the error vector is drawn from a certain distribution, in this case the Gaussian.
\end{remark}

\subsubsection*{Acknowledgments}
The second author is  supported by the Estonian Research Council grant number PRG49. The third author  is  supported by the Swiss National Science Foundation grant number 195290.

\bibliographystyle{plain}
\bibliography{biblio}

\end{document}